\newtheorem{theorem}{Theorem}[section]
\newtheorem{lemma}[theorem]{Lemma}
\newtheorem{conjecture}[theorem]{Conjecture}
\newcommand{\ket}[1]{|{#1}\rangle}
\newcommand{\bra}[1]{\langle{#1}|}
\newcommand{\R}{\mathbf{R}}
\newcommand{\Z}{\mathbf{Z}}
\DeclareMathOperator{\tr}{tr}
\DeclareMathOperator{\area}{area}
\title{\boldmath Bit threads and holographic entanglement}
\author[a]{Michael Freedman}
\author[b]{and Matthew Headrick}
\affiliation[a]{Station Q, Microsoft Research, Santa Barbara, California 93106, USA and Department of Mathematics, University of California, Santa Barbara, California 93106 USA}
\affiliation[b]{Martin Fisher School of Physics, Brandeis University, Waltham, Massachusetts 02453, USA}
\abstract{The Ryu-Takayanagi (RT) formula relates the entanglement entropy of a region in a holographic theory to the area of a corresponding bulk minimal surface. Using the max flow-min cut principle, a theorem from network theory, we rewrite the RT formula in a way that does not make reference to the minimal surface. Instead, we invoke the notion of a ``flow'', defined as a divergenceless norm-bounded vector field, or equivalently a set of Planck-thickness ``bit threads''. The entanglement entropy of a boundary region is given by the maximum flux out of it of any flow, or equivalently the maximum number of bit threads that can emanate from it. The threads thus represent entanglement between points on the boundary, and naturally implement the holographic principle. As we explain, this new picture clarifies several conceptual puzzles surrounding the RT formula. We give flow-based proofs of strong subadditivity and related properties; unlike the ones based on minimal surfaces, these proofs correspond in a transparent manner to the properties' information-theoretic meanings. We also briefly discuss certain technical advantages that the flows offer over minimal surfaces. In a mathematical appendix, we review the max flow-min cut theorem on networks and on Riemannian manifolds, and prove in the network case that the set of max flows varies Lipshitz continuously in the network parameters.
}
\preprint{BRX-TH-6302, NSF-KITP-16-051}
\begin{document} 
\maketitle
\flushbottom

\section{Introduction}
\label{sec:intro}

\begin{figure}[tbp]
\centering
\includegraphics[width=0.5\textwidth]{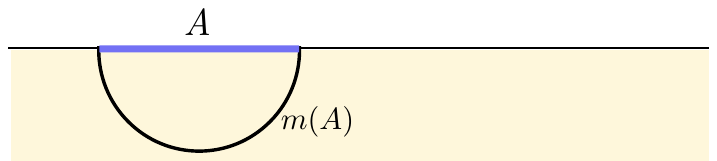}
\caption{\label{fig:RT}
According to the Ryu-Takayanagi formula, \eqref{RT}, the entanglement entropy $S(A)$ of a given boundary spatial region is given by the area of a corresponding bulk minimal surface $m(A)$.
}
\end{figure}

The Ryu-Takayanagi entanglement entropy formula \cite{Ryu:2006bv,Ryu:2006ef} is by now a firmly established entry in the holographic dictionary. This formula, which applies when the bulk is static and governed by classical Einstein gravity,\footnote{We will restrict our attention in the bulk of this paper to the regime of applicability of the RT formula. In the last section, we will briefly discuss its covariant generalization \cite{Hubeny:2007xt} as well as stringy and quantum corrections.} gives the EE of an arbitrary spatial region $A$ in terms of the area of $m(A)$, the minimal bulk surface homologous to $A$ (fig.\ \ref{fig:RT}):
\begin{equation}\label{RT}
S(A) = \frac1{4G_{\rm N}}\area(m(A))\,.
\end{equation}
In addition to being calculationally useful, this beautiful formula is widely believed to contain some deep---but still hidden---conceptual message about the nature of quantum gravity and the emergence of spacetime.

In trying to decode the conceptual implications of the RT formula, it is natural to wonder how one should think about the minimal surface $m(A)$, to which the formula seems to assign a special status. A naive interpretation is that the bits encoding the microstate of $A$ somehow ``live on'' the minimal surface $m(A)$, at a density of one bit per four Planck areas.\footnote{\label{bits}Actually, $\ln2$ bits. As our aims are mostly conceptual, for simplicity of presentation we will consistently misuse ``bit'' to mean ``$\ln 2$ bits'' (sometimes called a ``nat'').} A similar interpretation can be given to the Bekenstein-Hawking black-hole entropy formula (which to a certain extent is a special case of the RT formula). However, whereas the location of the black-hole horizon is fixed by the causal structure of the spacetime, the minimal surface $m(A)$ depends on the arbitrary choice of boundary region $A$, and this freedom reveals several problems with the above interpretation.

\begin{figure}[tbp]
\centering
\includegraphics[width=0.9\textwidth]{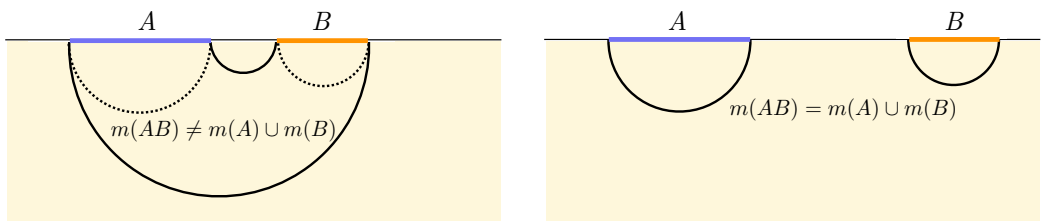}
\caption{\label{fig:transition}
The minimal surface for the union of two separated regions undergoes a transition as a function of their separation between connecting them at small separation (left) and equalling the union of their respective minimal surfaces at large separation (right).}
\end{figure}
First, the minimal surface can jump under continuous deformations of $A$ \cite{Hirata:2006jx,Nishioka:2006gr,Klebanov:2007ws,Headrick:2010zt}, suggesting that the bits strangely jump from one place to another.\footnote{One might wonder whether such a jump, which is due to competing minimal surfaces, reflects a jump in the type of microstate represented in the reduced density matrix $\rho_A$, in other words a first-order phase transition between competing macrostates. However, this seems unlikely since, according to the RT formula, the entropy is given by the \emph{least}-area surface, whereas in a conventional phase transition, it is the macrostate with the \emph{largest} entropy that dominates (in the microcanonical ensemble). See \cite{Headrick:2013zda} for further discussion of this issue.} Consider for example the classic case of two separated regions $A$, $B$. The minimal surface $m(AB)$ for their union typically connects them at sufficiently small separation; as the separation is increased, however, it jumps to being the union of their respective minimal surfaces $m(A)\cup m(B)$ (fig.\ \ref{fig:transition}).

\begin{figure}[tbp]
\centering
\includegraphics[width=0.7\textwidth]{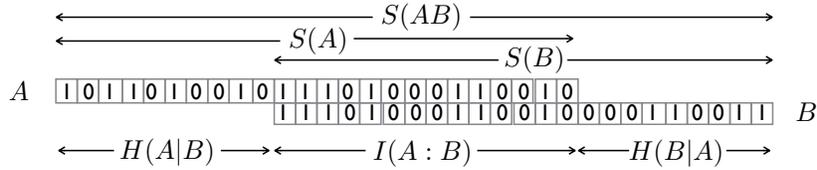}
\caption{\label{fig:MIclassical}
Schematic illustration of the mutual information and conditional entropy in a classical system: Two correlated systems $A$ and $B$ can be encoded into $S(A)$ and $S(B)$ bits respectively, such that $I(A:B)$ bits of each are perfectly correlated, $H(A|B)$ bits of $A$ are uncorrelated with those of $B$, and $H(B|A)$ bits of $B$ are uncorrelated with those of $A$.
}
\end{figure}

\begin{figure}[tbp]
\centering
\includegraphics[width=0.7\textwidth]{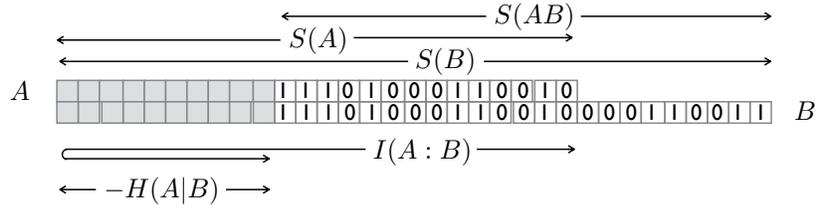}
\caption{\label{fig:MIquantum}
The shaded boxes indicate pairs of bits that are maximally entangled between $A$ and $B$. Each such EPR pair contributes 1 to $S(A)$ and $S(B)$, 0 to $S(AB)$, 2 to $I(A:B)$, and $-1$ to $H(A|B)$. In this case, $H(A|B)$ is negative.
}
\end{figure}

Related to this, it seems mysterious why the conditional entropy,
\begin{equation}
H(A|B):=S(AB)-S(B)\,,
\end{equation}
which measures the expected entropy of $A$ conditioned on knowing the state of $B$, and the mutual information
\begin{equation}
I(A:B):=S(A)+S(B)-S(AB)\,,
\end{equation}
which measures the amount of correlation between $A$ and $B$, should be given by a difference of areas of surfaces that may pass through different parts of the spacetime (as in the left side of fig.\ \ref{fig:transition}). To put this question into context, let us briefly and heuristically recall why these particular linear combinations of entropies have special information-theoretic meanings, starting with the classical case. The state of $AB$ can be encoded in a compressed form, such that it is represented by $S(AB)$ bits, the state of $A$ by $S(A)$ bits, and the state of $B$ by $S(B)$ bits. Then $I(A:B)$ is clearly the number of bits that are shared by $A$ and $B$, while $H(A|B)$ is the number that appear only in $A$ and $H(B|A)$ the number that appear only in $B$ (fig.\ \ref{fig:MIclassical}). In the quantum case, a bit\footnote{Strictly speaking, a qubit. For simplicity, in this paper we will apply the term ``bit'' uniformly to the classical and quantum cases.} of $A$ may be maximally entangled with a bit of $B$; such an EPR (or Bell) pair is pure in the joint system and so doesn't contribute to $S(AB)$, hence contributes $2$ to $I(A:B)$, $-1$ to $H(A|B)$, and $-1$ to $H(B|A)$ (fig.\ \ref{fig:MIquantum}). Again, in the RT calculation of these quantities, it is far from clear what the difference between the areas of minimal surfaces passing through different parts of the spacetime has to do with any redundancy or cancellation between the bits in $A$ and in $B$.

A similar confusion arises for properties of entanglement entropies such as subadditivity and strong subadditivity. These fundamental properties have clear information-theoretic meanings, namely the positivity and monotonicity under inclusion of correlations, respectively. It can be proven that the RT formula obeys these properties \cite{Headrick:2007km,Headrick:2013zda}. However, the proofs, which involve cutting and gluing minimal surfaces, bear little apparent relation to the information-theoretic meanings of the properties. In the absence of such a connection, it seems almost fortuitous that the formula satisfies these properties.

Arguably, the fallacy in ascribing too much significance to the minimal surface is in thinking of its area---and therefore $S(A)$---as a local property. In fact, since $m(A)$ is defined by its minimality, its area is really a global property of the entire bulk spacetime. To emphasize this, one can write the RT formula without $m(A)$ explicitly appearing:
\begin{equation}\label{RT2}
S(A) = \frac1{4G_{\rm N}}\min_{m\sim A}\area(m)\,,
\end{equation}
where $\sim$ means homologous.

If we don't think of the bits of $A$ as ``living on'' $m(A)$, how then should we interpret the RT formula? In this note we will provide a new interpretation which clarifies the above conceptual issues. This interpretation offers a transparent relation between the EE calculated from the formula, as well as the quantities and properties derived from it, and their information-theoretic meanings. It is hoped that this interpretation will be suggestive of a new way to think about the emergence of spacetime in quantum gravity.

Let us briefly summarize the new interpretation here. We will begin by rewriting the formula in a way that does not involve the minimal surface, or indeed any surfaces at all. Instead, we will invoke the notion of a \emph{flow}, defined as a divergenceless vector field in the bulk with pointwise bounded norm; note that this is a global object, not localized anywhere in the bulk. Its flow lines can be thought of as a set of ``threads'' with a cross-sectional area of 4 Planck areas. In this picture, each thread leaving the region $A$ carries one independent bit of information about the microstate of $A$; $S(A)$ is thus the maximum possible number of threads emanating from $A$. The equivalence of this formulation to equation \eqref{RT} arises from the fact that the minimal surface acts as a bottleneck limiting the number of threads emanating from $A$; this is formalized by the so-called max-flow min-cut (MFMC) principle, a theorem originally from network theory but which we use here in its Riemannian geometry version \cite{Federer74,MR700642,MR1088184}.\footnote{The network version of MFMC was recently applied to compute EEs in a tensor-network toy model of holography \cite{Pastawski:2015qua}.} The threads thus naturally implement the holographic principle \cite{'tHooft:1993gx,Susskind:1994vu}; the entropy is computed by an area rather than a volume simply because one is counting one-dimensional rather than pointlike objects. Both entangled and classically correlated pairs of bits are naturally described in terms of these threads, along with important information-theoretic quantities like conditional entropy, mutual information, and conditional mutual information. Subadditivity and strong subadditivity follow immediately from this picture, and moreover the proofs of these properties correspond in a transparent way to their information-theoretic meanings. Unlike the minimal surfaces, the threads do not jump under continuous deformations of the region $A$.\footnote{Since, in the thread picture, the minimal surface is eliminated as a fundamental object, an interesting question is how to think about the entanglement wedge, the bulk region that interpolates between $A$ and $m(A)$ \cite{Headrick:2014cta}. In particular, recent discussions of ``subregion duality'' and ``entanglement wedge reconstruction'' have suggested that the entanglement wedge may carry the information in the reduced density matrix $\rho_A$ (see e.g.\ \cite{Czech:2012bh,Headrick:2014cta,Jafferis:2015del}). We leave consideration of this issue to future work.} The new formulation also has certain technical advantages that we will describe.

We will explain the MFMC principle and describe the new formulation of the RT formula in the next section. In section \ref{sec:threads}, we will describe the bit threads and explain how they give rise to a natural interpretation of the formula. We conclude in section \ref{sec:discussion} with a discussion of open questions. Appendix \ref{sec:mfmc} contains a mathematical review of aspects of MFMC, focusing on its Riemannian geometry version; it also gives proofs in the network setting and conjectures in the Riemannian setting of two properties of flows (continuity and nesting) that we will need.

\section{Flows}\label{sec:reformulation}

In this section, we will state the max-flow min-cut (MFMC) theorem in its Riemannian-geometry version, and then use it to give a reformulation of the Ryu-Takayanagi formula that is mathematically equivalent to \eqref{RT} but does not make reference to the minimal surface. MFMC is a standard tool in network theory, where it originated. On the other hand, the literature in the Riemannian setting is rather obscure. Therefore, in appendix \ref{sec:mfmc} we provide a short review, and discuss some relevant extensions.

\subsection{Max flow-min cut principle}\label{MFMC}

Given an oriented Riemannian manifold with boundary and a positive constant $C$, we define a ``flow'' to be a vector field $v$ satisfying the following two properties:
\begin{equation}
\nabla_\mu v^\mu=0\,,\qquad|v|\le C\,.
\end{equation}
(We do not impose any boundary condition on $v$.) We define a ``surface'' to be an oriented codimension-one submanifold, and denote the flux of $v$ through a surface $m$ by $\int_mv$:
\begin{equation}
\int_mv:=\int_m\sqrt{h}\,n_\mu v^\mu\,,
\end{equation}
where $h$ is the determinant of the induced metric on $m$ and $n$ is the unit normal vector. Let $A$ be a region\footnote{Technically, by ``region'' we mean codimension-zero submanifold.} of the boundary. The divergenceless condition implies that the flux through $A$ equals that through any homologous\footnote{If $A$ is not closed, then by ``homologous'' we mean relative to $\partial A$.} surface:
\begin{equation}
m\sim A\quad\Rightarrow\quad\int_mv=\int_Av\,.
\end{equation}
Meanwhile, the norm bound $|v|\le C$ implies $n_\mu v^\mu\le C$, so this flux is bounded by the area of $m$:
\begin{equation}\label{bound}
\int_mv\le C\int_m\sqrt{h}=C\area(m)\,.
\end{equation}
Maximizing on one side over all flows $v$ and minimizing on the other over all surfaces homologous to $A$, we therefore have\footnote{At this stage, to be mathematically correct, we should really put sup and inf instead of max and min. However, one can show under certain conditions that the supremum and infimum are achieved; see appendix \ref{sec:mfmc}.}
\begin{equation}\label{bound2}
\max_v\int_Av\le C\min_{m\sim A}\area(m)\,.
\end{equation}

\begin{figure}[tbp]
\centering
\includegraphics[width=0.5\textwidth]{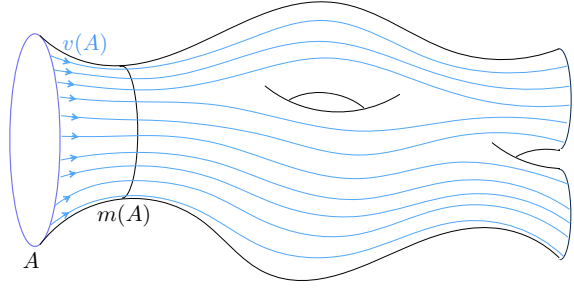}
\caption{\label{fig:maxflow}
Illustration of the Riemannian max-flow min-cut theorem: Given a boundary region $A$, the minimal-area representative of its homology class $m(A)$ is the bottleneck; its area gives an upper bound on the flux for any flow. The theorem asserts that there exists a flow $v(A)$ whose flux equals the area of $m(A)$ (times the constant $C$). In the figure, $v(A)$ is shown by its flow lines. On $m(A)$, this flow necessarily equals $C$ times the unit normal $n$.
}
\end{figure}

So far this is all fairly obvious. The max-cut min-flow (MFMC) theorem \cite{Federer74,MR700642,MR1088184} makes the non-trivial statement that the inequality \eqref{bound2} is in fact saturated:
\begin{equation}\label{mfmc}
\max_v\int_Av=C\min_{m\sim A}\area(m)\,.
\end{equation}
In other words, the inequalities \eqref{bound} for all the different members $m$ of the homology class are the only obstructions to increasing the flux; the strongest of these is obviously the area-minimizing representative $m(A)$---this is the bottleneck. Any flow that achieves the maximum flux clearly must have $n_\mu v^\mu=C$, hence $v=Cn$, everywhere on $m(A)$.\footnote{The Riemannian MFMC theorem can be phrased in the language of calibrations. Via the Hodge star,  $w=\star(v_\mu dx^\mu)$, the definition of a flow $v$ is equivalent to that of a $(d-1)$-calibration $w$ (setting $C=1$ to conform to the usual definition), and the statement ``$v=n$ on the surface $m$'' is equivalent to the statement ``$w$ calibrates $m$''. It is a standard result that if a surface is calibrated then it has minimal area in its homology class. The MFMC theorem asserts the converse: any surface that is minimal in its homology class is calibrated. While calibrated implies minimal in any codimension, the converse is special to codimension 1, as can be shown by simple counterexamples. See the appendix for further discussion.} Elsewhere, however, the constraints are weaker and there is considerable freedom in choosing the flow. Thus, whereas the area-minimizer $m(A)$ is generically unique, the flux-maximizer generically enjoys an enormous (infinite-dimensional) degeneracy.\footnote{Mathematically, for generic metrics, the max flow is \emph{underdetermined}. This explains why the topic lies outside mainstream differential geometry; there is no well-posed PDE to solve!} We will let $v(A)$ denote \emph{any} flux-maximizing flow. The theorem is illustrated in figure \ref{fig:maxflow}.

\begin{figure}[tbp]
\centering
\includegraphics[width=0.3\textwidth]{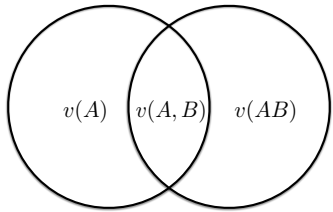}
\caption{\label{fig:Venn}
The nesting property asserts that the set $\{v(A)\}$ of flows maximizing the flux on $A$ overlaps the set $\{v(AB)\}$ of flows maximizing the flux on $AB$. Flows in the overlap are referred to as $v(A,B)$.
}
\end{figure}

Two extensions of the theorem will be useful to us in what follows. In subsections \ref{continuity} and \ref{nesting} respectively of the appendix, we will prove each property in the network setting and suggest how the proof may be carried over to the Riemannian setting.\footnote{We are not aware of proofs of these statements in the literature. However, the literature on the network version of MFMC is very extensive, and it seems likely that one or both of these properties have previously been noted in some form.} Firstly, the maximizing flow $v(A)$ varies continuously under continuous deformations of $A$; more precisely, given the degeneracy of the maximizer, it can be chosen to vary continuously. Secondly, suppose we have two regions $A$, $B$ of the boundary, which without loss of generality we assume to be disjoint. We cannot in general find a flow that maximizes the flux through both regions simultaneously. The reason is that the bottleneck $m(AB)$ for their union $AB$ may have an area smaller than the sum of the areas of $m(A)$ and $m(B)$. Then
\begin{equation}
\int_Av+\int_Bv=\int_{AB}v\le C\area(m(AB))<C\area(m(A))+C\area(m(B)),
\end{equation}
implying that either the flux through $A$ or through $B$ fails to achieve its maximum. On the other hand, there \emph{do} always exist flows that simultaneously maximize the flux through $A$ and through $AB$. We will call this the ``nesting'' property, and will denote such a flow by $v(A,B)$. Thus any flow called $v(A,B)$ could also be called $v(A)$ or $v(AB)$ (but not in general $v(B)$). (See figure \ref{fig:Venn}.) It is also useful to think of $v(A,B)$ in two other, equivalent ways: as a flow that maximizes the flux through $B$ among those that maximize the flux through $A$; and as a flow that \emph{minimizes} the flux through $B$ among those that maximize the flux through $AB$. There is an obvious generalization to more regions; for example $v(A,B,C)$ simultaneously maximizes the fluxes through $A$, $AB$, and $ABC$.

\subsection{Reformulation of Ryu-Takayanagi}\label{sec:reformulation2}

\begin{figure}[tbp]
\centering
\includegraphics[width=0.5\textwidth]{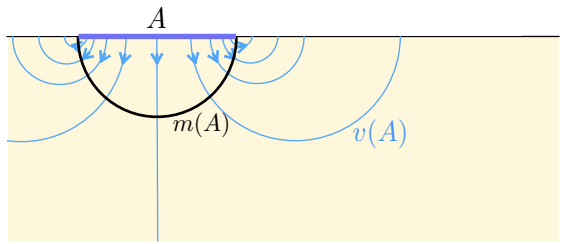}
\caption{\label{fig:maxflow2}
According to eq.\ \eqref{RTnew}, the entanglement entropy of the region $A$ is given by the maximum flux through $A$ of any flow. A maximizing flow $v(A)$ is illustrated by its flow lines in blue. This flux will equal the area of the RT minimal surface $m(A)$ (divided by $4G_{\rm N}$).
}
\end{figure}

We now return to the holographic context. The Riemannian manifold is a constant-time slice of a static bulk spacetime, $A$ is a region of its conformal boundary,\footnote{\label{boundaries}In additional to the conformal boundary where the dual field theory lives, the slice may have a boundary which is a horizon. Recall that there is no boundary condition on $v$, so it may have non-zero flux through horizons. The bulk may also end on singularities such as orbifold and orientifold fixed planes, end-of-the-world branes, and walls where internal dimensions cap off. However, as explained in \cite{Headrick:2013zda}, these do not count of ``boundaries'' for the purposes of computing holographic entanglement entropy, and therefore $v$ must have vanishing flux through them.} and we set $C=1/(4G_{\rm N})$. By \eqref{mfmc}, we can now rewrite the Ryu-Takayanagi formula \eqref{RT2} in the following simple way (see fig.\ \ref{fig:maxflow2}):
\begin{equation}\label{RTnew}
S(A) = \max_v\int_Av\,.
\end{equation}
It is worth noting that both the global minimization and the homology condition in the usual formulation are automatically incorporated in \eqref{RTnew}. Returning to the question in the introduction, ``How should we think about the minimal surface?'', the answer is just that it serves as the bottleneck for the flow. If the region $A$ is varied, the bottleneck can jump even while the flow changes continuously.

If the region has a non-empty entangling surface $\partial A$, then the entanglement entropy (EE) will have an ultraviolet divergence. (There may also be an infrared divergence, if there is a finite entropy density and the region is infinite in extent. Similar remarks to those below apply to that case.) In the minimal-surface picture, this is due to the divergent surface area near the boundary; in the flow picture, it is due to the divergent flow near the entangling surface. It is then necessary to regulate the divergence by moving the boundary to a finite value of the radial coordinate. There is, however, an interesting difference between the surface and flow pictures in this regard. In the surface picture, it is necessary to introduce a regulator even to \emph{define} the minimal surface. In particular, while one can define a locally minimal surface even if its area is infinite (namely a surface whose local area increases under local variations), one cannot determine which of several such surfaces should be considered the global minimum. On the other hand, we can give a definition of a maximal flow that applies even if the flux is infinite. First, given a flow $v$ we define an ``augmentation'' as a vector field $\Delta v$ such that $v+\Delta v$ is also a flow and $\Delta v$ has positive flux through $A$. A maximal flow is then one that does not admit an augmentation. The utility of this definition will become clear when we discuss the mutual information in subsection \ref{sec:tworegions} below.

The conceptual implications of \eqref{RTnew} are the main focus of this paper. However, as an aside we note that this formula may actually be useful for the numerical evaluation of holographic EEs. Finding a max flow requires maximizing a linear functional on a vector space (the space of divergenceless vector fields) subject to a convex constraint; in other words, it is a convex optimization problem. This is in contrast to the problem of finding the min cut, which requires finding the global minimum of a functional that is defined on a non-linear space and typically has local minima.\footnote{Actually, min cut can also be turned into a convex problem, as follows. (This is an example of ``convex relaxation''.) One considers a real function $\psi$ on the manifold, subject to the constraint $\psi|_A=1$, $\psi|_{A^c}=0$, and minimizes the functional $\mathcal{F}[\psi]:=\int\sqrt{g}|\nabla\psi|$. 
On the minimum, $\psi = 1$ on the ``entanglement wedge'' $r(A)$ (the bulk region that interpolates between $A$ and $m(A)$) and 0 on its complement; hence $|\nabla\psi|$ is a delta function supported on $m(A)$, and $\mathcal{F} = \area(m(A))$. Min cut, in this form, is related to max flow by the so-called ``strong duality'' of convex optimization problems. A fuller explanation of this will be given in \cite{covariantflows}.} For that reason, for certain classes of computational problems such as image processing, flow maximization is often used as a method for finding minimal surfaces. The basic strategy is the so-called Ford-Fulkerson algorithm: Start with an arbitrary flow and augment it until it can't be augmented anymore. We leave the investigation of possible numerical applications of \eqref{RTnew} to future work.

Given multiple subsystems of a quantum system, there are several linear combinations of EEs that have information-theoretic significance, and can easily be evaluated using the RT formula. The most important of these are the conditional entropy, mutual information, and conditional mutual information. The crucial subadditivity and strong subadditivity (SSA) inequalities are most simply expressed in terms of these quantities. In the rest of this section, we will see that these quantities and properties are naturally expressed in terms of flows.

\subsection{Two regions}\label{sec:tworegions}

We begin with the conditional entropy $H(A|B):=S(AB)-S(B)$, which has a simple and useful expression in terms of flows. Using the nesting property explained in subsection \eqref{MFMC}, we choose a flow $v(B,A)$ that simultaneously maximizes the flux through $B$ and through $AB$. Then
\begin{equation}
H(A|B)=\int_{AB}v(B,A)-\int_Bv(B,A)=\int_Av(B,A)\,.
\end{equation}
Thus it is the \emph{minimum} possible flux through $A$, the amount left over after as much as flux possible has been put on $B$, given that the $AB$ flux has been maximized. (Note that this amount may be negative; we will discuss examples in the next section.)

The mutual information $I(A:B):=S(A)+S(B)-S(AB)=S(A)-H(A|B)$ is the difference between the maximum and minimum flux on $A$,
\begin{equation}
I(A:B)=\int_Av(A,B)-\int_Av(B,A)\,,
\end{equation}
i.e.\ the amount of flux that can be shifted from $A$ over to $B$ (again, always maximizing on $AB$). If it is zero, then the flow that maximizes the flux on $AB$ and $B$ also maximizes on $A$; this implies that the $AB$ bottleneck simply consists of the union of the $A$ and $B$ bottlenecks.

The fact that $I(A:B)$ is the difference between the maximum and minimum fluxes through $A$ (subject to maximizing on $AB$) immediately implies that it is non-negative; this is subadditivity of EE. It can also be proved without appealing to the nesting property.\footnote{We thank V. Hubeny for helpful discussions on this and related points.} Simply pick any flow $v(AB)$ that maximizes on $AB$; by \eqref{RTnew}, its flux through $A$ cannot exceed $S(A)$, and similarly for $B$, so we have
\begin{equation}\label{SAnonesting}
S(A)+S(B)\ge\int_Av(AB)+\int_Bv(AB) = \int_{AB}v(AB)=S(AB)\,.
\end{equation}

If the regions $A,B$ do not share a boundary, then the ultraviolet divergences in their EEs are additive in $S(AB)$, so the mutual information is ultraviolet-finite. To calculate or even define this quantity using the minimal-surface formulation of the RT formula requires first introducing and then removing a regulator. However, as discussed in the previous subsection, in the flow picture we can define maximal flows even when they have an infinite flux. Using this, we can define the mutual information directly in the unregulated theory. As above, we let $v(A,B)$ be a maximal flow through $A$ and $AB$ (i.e.\ one for which there exists neither an augmentation $\Delta v$ such that $\int_A\Delta v>0$ nor one such that $\int_{AB}\Delta v>0$), and similarly for $v(B,A)$. We then define
\begin{equation}\label{MIflow}
I(A:B)=\int_A\left(v(A,B)-v(B,A)\right).
\end{equation}
This will agree with the definition from introducing and removing a regulator. Thus, even if the total fluxes are infinite, the amount of flux that can be shifted between $A$ and $B$ is a well-defined quantity. 

\begin{figure}[tbp]
\centering
\includegraphics[width=0.5\textwidth]{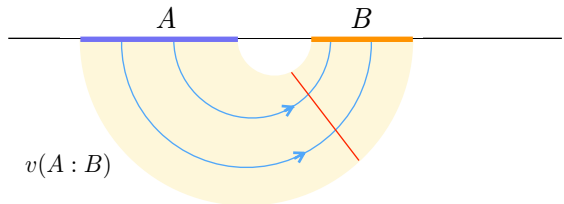}
\caption{\label{fig:MIflow}
The flow $v(A:B)$ defined by equation \eqref{MIflow} vanishes outside the entanglement wedge $r(AB)$ (shaded region), which is a tube connecting $A$ and $B$. Its flux, which is half the mutual information, is bounded above by the area of the neck of this tube (shown in red). 
}
\end{figure}

Equation \eqref{MIflow} leads to an interesting connection between the mutual information and the entanglement wedge $r(AB)$, the bulk region that interpolates between $AB$ and $m(AB)$. 
First, note that the vector field
\begin{equation}\label{MIflow}
v(A:B):=\frac12\left(v(A,B)-v(B,A)\right),
\end{equation}
is itself a flow. Since $v(A,B)$ and $v(B,A)$ both equal the unit normal on $m(AB)$, $v(A:B)$ vanishes there. Furthermore, outside of $r(AB)$, $v(A,B)$ and $v(B,A)$ can be chosen equal (since they are subject to the same constraints), making $v(A:B)$ vanish. Thus we can assume that $v(A:B)$ is non-zero only inside $r(AB)$, which is a tube connecting $A$ and $B$. Since $v(A:B)$ is a flow, its flux---which is half the mutual information---is bounded above by the area of the ``neck'' of this tube, the least-area surface in $r(AB)$ separating $A$ and $B$ (technically, the least-area surface in $r(AB)$ homologous to $A$ relative to $m(AB)$).\footnote{This statement can also be proven using minimal surfaces rather than flows by appropriately cutting up $m(AB)$.} This situation is illustrated in fig.\ \ref{fig:MIflow}. As an example, for two intervals $A=[-y,-1]$, $B=[1,y]$ on the boundary of AdS${}_3$, $r(AB)$ is connected as long as $y>3+2\sqrt2$. The area of the neck is $\frac c6\ln y$, which is larger than half the mutual information, $\frac12I(A:B) = \frac c3\ln((y-1)/(2\sqrt y))$. (In the regime where $r(AB)$ is disconnected, both the neck and the mutual information vanish, so the bound is trivial.)

The constraints $\nabla_\mu v^\mu=0$, $|v|\le1/(4G_{\rm N})$ that define a flow are invariant under $v\to-v$. Therefore, in addition to the definitional upper bound $\int_Av\le S(A)$, we have the lower bound $\int_Av\ge-S(A)$. In other words, the bottleneck constrains the flux in either direction. We can use this to give a proof of the Araki-Lieb inequality $|S(B)-S(A)|\le S(AB)$, similar to the proof \eqref{SAnonesting} for subadditivity. Let $v(B)$ be any flow maximizing the flux out of $B$. Then
\begin{equation}
S(AB)+S(A) \ge \int_{AB}v(B)-\int_Av(B) = \int_Bv(B)=S(B)\,.
\end{equation}

\subsection{Three regions}\label{sec:threeregions}

The most important quantity involving three subsystems is the conditional mutual information $I(A:B|C)$, so-called because (classically) it is the mutual information between $A$ and $B$, conditioned on $C$:
\begin{equation}
I(A:B|C):=S(AC)+S(BC)-S(C)-S(ABC)\,.
\end{equation}
This can be written in various ways in terms of conditional entropies or mutual informations. For example, we can write it as
\begin{equation}
I(A:B|C) = H(A|C)-H(A|BC)
 = \int_Av(C,A) - \int_Av(BC,A)\,.
\end{equation}
Invoking the nesting property again, we can assume without loss of generality that the first flow also maximizes on $ABC$ and the second also maximizes on $C$; then we have
\begin{equation}\label{CMI}
I(A:B|C) = \int_Av(C,A,B) - \int_Av(C,B,A)\,.
\end{equation}
The first term is the maximum possible flux through $A$ and the second term the minimum, always subject to the constraint of first maximizing on $C$ and $ABC$. Thus the conditional mutual information is the amount of flux that can be shifted between $A$ and $B$ subject to those constraints.

Since the maximum cannot be less than the minimum, we have $I(A:B|C)\ge0$, which is SSA. This proof that RT obeys SSA is in some ways simpler than the one based on cutting and pasting minimal surfaces, for which various special cases must be taken into account \cite{Headrick:2007km,Headrick:2013zda}. More importantly, as we will discuss in the next section, the present proof relates in a transparent manner to the information-theoretic meaning of SSA.

\begin{figure}[tbp]
\centering
\includegraphics[width=0.5\textwidth]{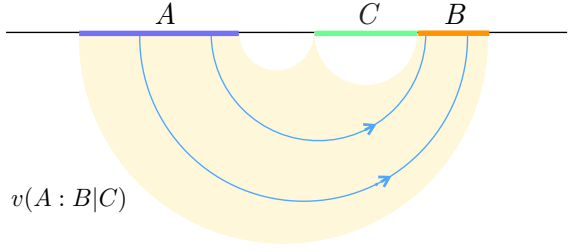}
\caption{\label{fig:CMIflow}
The flow $v(A:B|C)$ defined by equation \eqref{CMIflow} vanishes outside the region $r(ABC)\setminus r(C)$ (shaded region), which is a tube connection $A$ and $B$. Its flux, which is half the conditional mutual information, is bounded above by the smallest area separating $A$ and $B$ in this tube.
}
\end{figure}

As with the mutual information, the conditional mutual information can be finite even when its component EEs are divergent; by rewriting \eqref{CMI} as
\begin{equation}
I(A:B|C) = \int_A\left(v(C,A,B) - v(C,B,A)\right),
\end{equation}
the flow again provides a regulator-free definition. Furthermore, the flow
\begin{equation}\label{CMIflow}
v(A:B|C):=\frac12\left(v(C,A,B)-v(C,B,A)\right)
\end{equation}
can be chosen to vanish outside of the region $r(ABC)\setminus r(C)$, which is a tube connecting $A$ and $B$.\footnote{We thank N. Bao for pointing this out.} Its flux, which is half the conditional mutual information, is bounded above by the area of the neck of this tube (the least-area surface separating $A$ and $B$); see figure \ref{fig:CMIflow}. Note that
\begin{equation}
v(A:B|C) = v(A:BC)-v(A:C)\,,
\end{equation}
(for appropriate choices of $v(A:BC)$ and $v(A:C)$), which is the analogue at the level of flows of the relation $I(A: B|C)=I(A:BC)-I(A:C)$ for fluxes.

The last linear combination of entropies we will consider is the tripartite information:
\begin{equation}
I_3(A:B:C):=S(A)+S(B)+S(C)+S(ABC)-S(AB)-S(AC)-S(BC)\,.
\end{equation}
Like the conditional mutual information, this can be rewritten in various ways in terms of mutual informations or conditional entropies---for example $I_3(A:B:C) = I(A:B)+I(A:C)-I(A:BC)$---and therefore in terms of flows. We leave this as an exercise for the reader. Our main interest here is in the fact that, in holographic theories, the tripartite information is always non-positive,
\begin{equation}\label{MMI}
I_3(A:B:C)\le0
\end{equation}
a property called\footnote{The name ``monogamy of mutual information'' is perhaps slightly obscure, as the connection between this inequality and monogamy of entanglement is rather indirect. A clearer name might be ``superadditivity of mutual information''.}  ``monogamy of mutual information'' (MMI) \cite{Hayden:2011ag,Headrick:2013zda}. The proof is similar to the original one for SSA, involving cutting and pasting minimal surfaces (again, various special cases have to be considered).\footnote{An infinite set of inequalities involving more than three regions has recently been discovered, generalizing MMI \cite{Bao:2015bfa}.} We have not been able to find a proof of this property in the flow language (i.e., that does not invoke MFMC to pass back to the minimal surface and then apply the known proof there). We will discuss this further below.

\begin{figure}[tbp]
\centering
\includegraphics[width=0.7\textwidth]{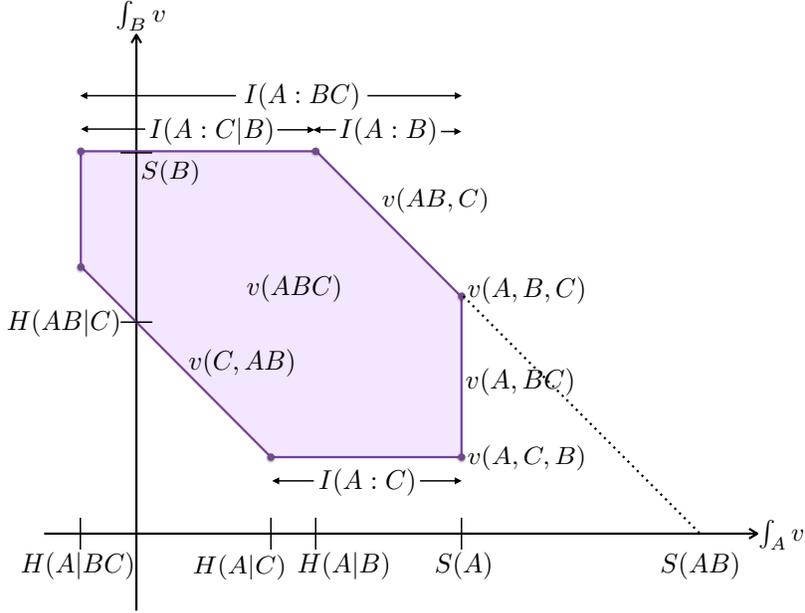}
\caption{\label{fig:hexagon}
Given a bulk geometry and three boundary regions $A,B,C$, the set of flows $v(ABC)$ that maximize the flux through $ABC$ can be plotted on a plane with the fluxes through $A$ and $B$ respectively as coordinates. The result is in general a hexagon or lower polygon where each side is either vertical, horizontal, or at a $45^\circ$ angle. The figure shows an illustrative example. Points in the interior of the hexagon represent flows that maximize only the flux through $ABC$; points lying on an edge represent flows that maximize also one other flux (e.g.\ $v(AB,C)$, $v(A,BC)$, $v(C,AB)$, shown on the diagram); and the vertices represent flows that maximize two other fluxes (e.g.\ $v(A,B,C)$, $v(A,C,B)$ shown). As shown in the figure, most quantities of interest that can be derived from the EEs of these regions---conditional entropies, mutual informations, and conditional mutual informations---are represented as positions or distances on this diagram. The tripartite information $I_3(A:B:C)=I(A:C)-I(A:C|B)$ is the difference in length between the bottom and of the top of the hexagon. In the example shown, this is negative as required by the MMI inequality. However, this is not the case for all hexagons (a simple counterexample is given in the main text).
}
\end{figure}

A useful way to visualize the various EEs and their linear combinations is shown in figure \ref{fig:hexagon}. We consider the set of all possible flows $v(ABC)$ that maximize the flux on $ABC$, and plot them on a plane using as coordinates their fluxes on $A$ and $B$ respectively. Note that, given the total flux through $ABC$ (which is $S(ABC)$), all other fluxes (through $AB$, $C$, etc.) are determined by those two. The possible fluxes will fill out a hexagon (or lower polygon) with edges that are horizontal, vertical, or at a $45^\circ$ angle running northwest-southeast.\footnote{Like the toric diagram of a del Pezzo surface.} As shown in the figure, many quantities of interest are represented by the positions of vertices and lengths of edges on this hexagon. Subadditivity and SSA are clear from the fact that these lengths cannot be negative. 

One can ask conversely whether a given hexagon can be realized as a set of fluxes for some actual geometry. Positivity of the entropies and the Araki-Lieb inequality impose some constraints on the positions of the edges. These, however, are not enough to enforce the MMI inequality \eqref{MMI}. A simple counterexample is a triangle with vertices at $(0,0)$, $(1,0)$, $(0,1)$; setting $S(ABC)=1$, this triangle represents the following entropies:
\begin{equation}
S(A)=S(B)=S(C)=S(AB)=S(AC)=S(BC)=1\,.
\end{equation}
These entropies are realized by the state
\begin{equation}
\rho_{ABC} = \frac12\left(\ket{000}\bra{000}+\ket{111}\bra{111}\right)
\end{equation}
(which can be purified to the 4-party GHZ state $(\ket{0000}+\ket{1111})/\sqrt{2}$). However, since $I_3(A:B:C)=1$, \eqref{MMI} is violated, and so the corresponding fluxes cannot be realized by any geometry. Since nesting and other basic properties of fluxes are implicitly satisfied in the hexagon construction, this counterexample shows that those properties are not sufficient to prove MMI. It follows that flows obey some other non-trivial property beyond nesting, which would be very interesting to discover.

\section{Interpretation}\label{sec:threads}

Our purpose in this section is to attach an interpretation---essentially, a set of pictures---to the right-hand side of equation \eqref{RTnew}, which will connect it to the information-theoretic meaning of its left-hand side, the entanglement entropy, as well as derived quantities and concepts like mutual information, subadditivity, etc. The interpretation, in terms of so-called bit threads, is explained in subsection \ref{sec:bitthreads} and expanded upon in subsection \ref{sec:bitthreads2}. These are followed in subsection \ref{sec:Weyl} by a further, more speculative interpretation, which relates bit threads to Weyl's law from harmonic analysis.

\subsection{Bit threads}\label{sec:bitthreads}

As with an electric, magnetic, or fluid velocity field, it is convenient to visualize the flow $v$ by its field lines. These are defined as a set of integral curves of $v$ chosen so that their transverse density equals $|v|$. We will call these flow lines ``bit threads'', for a reason that will become clear soon.\footnote{``Qubit threads'' would perhaps be more accurate, but seems awkward.} Please keep in mind that the threads are oriented.

The bit threads inherit two important properties from the definition of a flow. First, the bound $|v|\le1/(4G_{\rm N})$ means that they cannot be packed together more tightly than one per 4 Planck areas. Thus they have a microscopic but nonetheless finite thickness. In general, their density on macroscopic (i.e.\ AdS) scales will be of order $N^2$ (in the usual gauge/gravity terminology). Therefore, unless we are interested in $1/N$ effects (which we will mostly ignore in this paper), we should not worry too much about the discrepancy between the continuous flow and the discrete threads. Second, the condition $\nabla_\mu v^\mu=0$ means that the threads cannot begin, end, split, or join in the bulk; each thread can begin and end only on a boundary, which could be the conformal boundary where the field theory lives, or possibly a horizon (e.g.\ if we are considering a single-sided black hole spacetime).\footnote{However, the threads cannot end on singularities in the bulk; see the discussion in footnote \ref{boundaries}.}

We would now like to suggest that a thread that emanates from a region $A$ on the boundary (and does not return to it) should be thought of as a channel that can carry one independent bit\footnote{We remind the reader that, as explained in footnote \ref{bits}, by ``bit'' we mean ``$\ln 2$ bits''.} of information about the microstate of $A$. The maximum number of independent bits is the entropy $S(A)$.  This gives an interpretation to \eqref{RTnew}. The rest of this section will be devoted to developing this interpretation further, and using it to resolve the conceptual puzzles surrounding the RT formula that were described in the introduction.
 
 We begin with a few general comments. First, as emphasized in the previous section, the maximum allowed number of threads leaving $A$ is a global property of the bulk spacetime. The minimal surface $m(A)$ is the place where they happen to be most tightly packed together---where the bits are literally compressed, to their maximum allowed of 1 per 4 Planck areas. Under continuous deformations of $A$, even when the location of this bottleneck jumps, the thread configuration change in a continuous manner. This resolves one of the conceptual puzzles.

Also as emphasized in the previous section, even given the constraint of maximizing the number of threads leaving $A$, there remains considerable freedom in choosing the configuration, and in particular where to attach the threads in $A$. Given the large volume near the boundary, there is also considerable freedom to add extra threads that begin and end on $A$ without changing the net number leaving $A$. (Note, however, that these extra threads cannot cross the minimal surface, as there is no room for them there.) The freedom to move threads around and to add extra ones is a kind of gauge freedom, which, as we will explain in the next subsection, has an important physical significance.

Third, it is well known that a region $A$ with a non-empty entangling surface $\partial A$ will have a divergent entropy $S(A)$. In the usual formulation of RT, this is due to the infinite area of the minimal surface near the boundary. In this picture, it is due to the fact that an infinite number of threads can be squeezed into the bulk near the entangling surface.

Finally, as a simple example, consider a one-sided black hole, representing a mixed state of the field theory. The entropy of the entire boundary is simply the black hole's entropy, which (by either the Bekenstein-Hawking or the RT formula, since the horizon is the minimal surface) is the area of the horizon. In the bit-thread picture, the only threads that count are those that leave the boundary and don't return; since they can't end in the bulk, such threads must end on the horizon, which is itself the bottleneck. If we consider the two-sided version of the same black hole, but still evaluate the entropy of one boundary, then the threads continue to the other side and end on the other boundary, with the bottleneck still being the horizon.

\subsection{Correlations and entanglement}\label{sec:bitthreads2}

We now consider two disjoint regions $A$, $B$. We start with the case where the joint system $AB$ is pure. Then the amount of entanglement is equivalent to $S(A)$ ($=S(B)$) EPR pairs. Since the total flux on $AB$ must vanish, any thread leaving $A$ must either return to it or end on $B$. The maximum that may go from $A$ to $B$ is the maximum flux on $A$, which is $S(A)$. (In general one may consider thread configurations with threads going both ways. The flux measures the net number: the number going from $A$ to $B$ minus the number going the other way. However,  when the maximum flux is achieved, no threads may go the other way, as the threads going from $A$ to $B$ already occupy the entire bottleneck $m(A)$.) We can simply reverse the direction of all of these threads in order to obtain a configuration maximizing the number leaving $B$. Thus an entangled pair of bits is represented by a thread connecting $A$ and $B$, which switches direction depending on which entropy is being measured.

\begin{figure}[tbp]
\centering
\includegraphics[width=\textwidth]{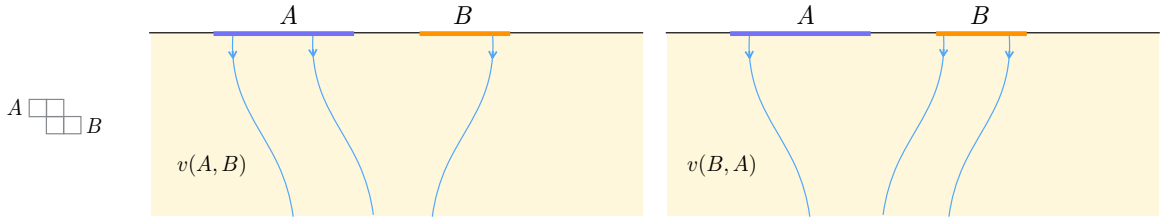}
\caption{\label{fig:toy1}
Allowed thread configurations for the toy example discussed in the text, with $S(A)=S(B)=2$, $S(AB)=3$. One thread is stuck on $A$, representing a bit that (in the compressed encoding of $AB$) is unique to $A$, one is stuck on $B$, representing a bit that is unique to $B$, and one can be moved between them, representing a pair of classically correlated bits. The box diagram on the left represents this state in the scheme of figure \ref{fig:MIclassical}.
}
\end{figure}

\begin{figure}[tbp]
\centering
\includegraphics[width=\textwidth]{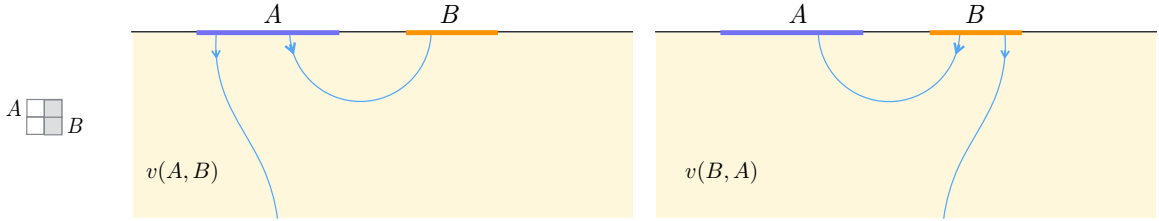}
\caption{\label{fig:toy3}
Allowed thread configurations for a toy example with $S(A)=S(B)=2$, $S(AB)=1$. In the configuration $v(A,B)$ (left), although two threads can leave $A$, only one can leave $AB$, so the second must go to $B$. In $v(B,A)$ (right), the first thread moves over to $B$, while the second switches direction. The moving thread represents a classically correlated pair of bits, while the connecting one represents an entangled pair. The box diagram on the left represents this state in the scheme of figure \ref{fig:MIquantum}.
}
\end{figure}

Classical correlations between $A$ and $B$ require $S(AB)$ to be nonzero. In such a case some threads---up to $S(AB)$ total---can leave $A$ or $B$ and end elsewhere (neither on $A$ nor $B$). (Again, when that number is saturated, the bottleneck $m(AB)$ will be fully occupied, so no threads can come the other way, beginning elsewhere and ending on $A$ or $B$.) Consider first a toy example, in which $S(A)=S(B)=2$ and $S(AB)=3$. We then have $H(A|B)=H(B|A)=I(A:B)=1$, so one bit of $AB$ (in its compressed form) is unique to $A$, one is unique to $B$, and one is redundant. How is this reflected in the thread configurations? With three threads total leaving $AB$, either two can come from $A$ and one from $B$ or vice versa (fig.\ \ref{fig:toy1}). Thus one thread is stuck to $A$, representing the bit unique to $A$; one is stuck to $B$, representing the bit unique to $B$; and one is free to move between $A$ and $B$, representing the redundant bit. In general, as long as both conditional entropies $H(A|B)$ and $H(B|A)$ are non-negative, in the thread configurations with the maximum $S(AB)$ threads leaving $AB$, $H(A|B)$ of them will be stuck to $A$, $H(B|A)$ will be stuck to $B$, and $I(A:B)$ will be free to move between them. On the other hand, if a conditional entropy is negative, for example $H(A|B)<0$, then in $v(A,B)$ some threads leaving $A$ must end on $B$ (since $S(AB)<S(A)$); this reflects the fact that a negative conditional entropy implies the presence of entanglement. See fig.\ \ref{fig:toy3} for an example.

Let us recap what we have learned so far: We consider the set of thread configurations that maximize the total number leaving $AB$. An entangled pair of bits is represented by a thread that connects $A$ to $B$ and can switch direction. A classically correlated pair is represented by a thread that leaves $AB$ and can begin on either $A$ or $B$. And a bit that is unique to $A$ ($B$) is represented by a thread that leaves $AB$ and is stuck to $A$ ($B$).

In the previous subsection, we promised to explain the significance of two ``gauge'' freedoms that occur when evaluating $S(A)$ using threads: the freedom to choose where to attach threads to the boundary, and the freedom to add extra threads that begin and end on $A$.  If we divide $A$ arbitrarily into two subregions, $A=A_1A_2$ and apply the lessons of the previous paragraph, we learn that the freedom to move the threads around reflects the existence of classical correlations between different spatial locations in the field theory, while the freedom to add extra threads reflects the existence of entanglement between different locations.

\begin{figure}[tbp]
\centering
\includegraphics[width=\textwidth]{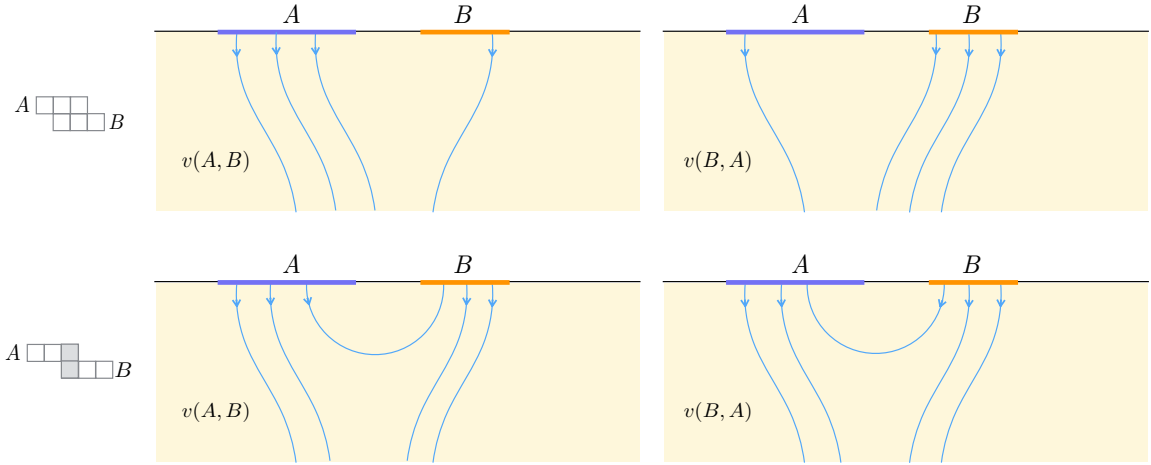}
\caption{\label{fig:toy2}
Two possible ways of arranging bits given the entropies $S(A)=S(B)=3$, $S(AB)=4$, as discussed in the second toy example of the text. Top row: one bit is unique to each of $A$ and $B$ and two pairs of bits are classically correlated; in the corresponding set of thread configurations, one thread is stuck to each of $A$ and $B$ while two are free to move between them. Bottow row: two bits are unique to each and one pair of bits is entangled; in the corresponding set of thread configurations, two threads are stuck to each of $A$ and $B$ and one connects them and switches direction. In the box diagrams on the left, as in figs.\ \ref{fig:MIclassical} and \ref{fig:MIquantum}, unshaded stacked boxes represent correlated bits and shaded ones entangled bits.
}
\end{figure}

Now, as emphasized in the introduction, the thread picture is just a rewriting of the RT formula, which tells us the entropy of any given region. These entropies alone, however, cannot always distinguish between entanglement and classical correlation. For example, while a negative conditional entropy implies the presence of entanglement, a positive one does not imply its absence. Let us see, in another toy example, how the threads can accommodate either possibility. We take $S(A)=S(B)=3$ and $S(AB)=4$, so $H(A|B)=H(B|A)=1$ and $I(A:B)=2$. There are two bit configurations that would both account for these entropies: one bit unique to each of $A$ and $B$ plus two correlated pairs; or two bits unique to each and one entangled pair. Indeed, the threads allow both options: either one thread attached to each of $A$ and $B$ and two that are free to move between them; or two threads attached to each and one connecting them (fig.\ \ref{fig:toy2}).

Finally, we consider three regions. The strong subadditivity property $I(A:BC)\ge I(A:C)$ says that $A$ has at least as much total correlation (including classical correlation and entanglement) with $BC$ as with $C$, i.e.\ that the amount of correlation is monotonic under inclusion. In the thread picture, this is represented by the intuitive fact, proven in subsection \ref{sec:threeregions}, that at least as many threads can be moved or connected between $A$ and $BC$ as between $A$ and $C$.

\begin{figure}[tbp]
\centering
\includegraphics[width=0.5\textwidth]{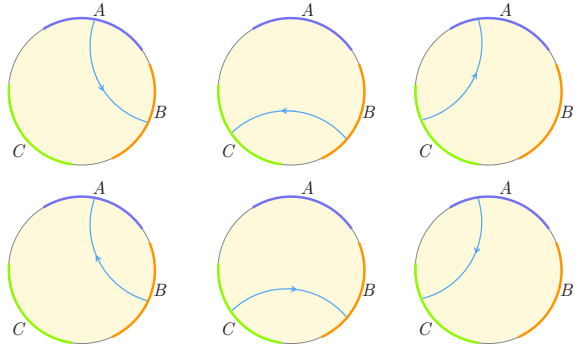}
\caption{\label{fig:GHZ}
Bit-thread representation of the GHZ state on three bits \eqref{GHZ}. Given the entropies of the different subsets of $ABC$, there are six allowed thread configurations. (A seventh, with no threads, is also allowed, of course.)
}
\end{figure}

We would like to make one final point concerning the bit-thread interpretation. Naively, given that each thread represents an entangled pair of bits, it would seem that this picture privileges bipartite entanglement, and perhaps even suggests that more complicated forms of entanglement do not occur holographically. To see that this is not the case, consider a final toy example, the GHZ state on three bits:
\begin{equation}\label{GHZ}
\ket{\text{GHZ}} = \frac1{\sqrt2}\left(\ket{000}+\ket{111}\right)\,.
\end{equation}
This state does not violate any of the known constraints on holographic EEs, and indeed it is has been argued that entanglement of this type occurs in multiboundary black holes \cite{Balasubramanian:2014hda}. The EEs for this state can be reproduced by a collection of six bit-thread configurations as shown in fig.\ \ref{fig:GHZ}. Thus, the threads certainly do accommodate multipartite entanglement. This example illustrates two important points: First, a given state is not represented by a single thread configuration, but rather a collection of configurations. Second, a thread connecting two regions only represents an entangled pair when the configuration maximizes the total number leaving the union of the regions. Thus, for example, in the top left and bottom left panels of fig. \ref{fig:GHZ}, a thread connects $A$ and $B$. However, it is not true that in the GHZ state $A$ and $B$ are entangled (tracing over $C$ leaves $A$ and $B$ merely classically correlated); indeed, those configurations do not maximize the number of threads leaving $AB$. Rather, that thread represents the entanglement between $A$ and $BC$ (as well as between $B$ and $AC$).

On the other hand, the monogamy of mutual information inequality \cite{Hayden:2011ag,Headrick:2013zda} and its generalizations \cite{Bao:2015bfa} show that not all patterns of multipartite entanglement are allowed holographically. Unfortunately, the meaning of these constraints remains obscure, and the proofs based on minimal surfaces are not very helpful. It seems reasonable to hope that, if proofs based on flows or threads can be found, they might clarify this meaning.

\subsection{Connection to Weyl's law}\label{sec:Weyl}

Having persuaded the reader that an incompressible flow from $A$ to $A^c$ is a viable alternative to the usual minimum area formulation of the  Ryu-Takayanagi idea, we would like to step back view our own suggestion a bit critically. The suggestion is certainly harmless as max-flow min-cut defines an equivalence between two dual pictures (more on this duality in appendix \ref{sec:mfmc}).  But the dual language appears to admit natural quantum mechanical corrections, so perhaps it can be formulated to make a deeper connection with quantum information. Can we see the bit-thread picture admitting the  flexibility to treat the generalization to higher-derivative gravities (e.g.\ Gauss-Bonnet) where minimum area is replaced by the minimum of another local functional $\int dA( 1+\alpha R)$.  In this section we will discuss tantalizing hints that the answer to these questions is ``Yes''.

We have already suggested a discrete interpretation for the incompressible flow $v$ as an enormous but finite collection of lines representing qubit entanglement---in the simplest manifestation each line can be thought of a singlet in $C^2 \otimes C^2$  where the first (second) factor resides in $A$ ($A^c$). Raising an index, such a line can equally be read as the ``identity'' isomorphism between these two qubits.  For some (deep) reason the lines must keep of order a Planck distance apart. This picture is a nice starting point for entangling $A$ and $A^c$, but can readily be enhanced. Instead of mere disjoint arcs we could consider a quantum circuit---again with a Planck-scale density restriction---joining $A$ to $A^c$. In this circumstance there is again a maximal possible entanglement, $-\tr\rho\log\rho$, between input and output of the circuit. The advantage of replacing singlet strands with a general quantum circuit is that more general  entanglement structures can thus be realized and these may be required to model the physics in the holographic dual.  Recent work \cite{Cui:2015pla,Hastings} shows that entanglement depends on number-theoretic properties of the Hilbert space dimensions in such a network. This feature may be useful in model building.

Alternatively, let us look at our bit threads with the eyes of a harmonic analyst.  What could they  mean? Space is not really packed full of Planck-spaced strings, they must be surrogates for something more fundamental: vibrational modes. How can we set this up?  Let $H$ be the Hilbert space of divergenceless flows through the bulk from $A^c$ to $A$ normalized at infinity (or equivalently closed $d-1$-forms evaluating $\pm1$ on tangent $d-1$ planes to $A^c$ ($A$) and zero on all perpendicular planes.) The Riemann-Laplace operator is elliptic with these (Dirichlet) boundary conditions and we may use its eigen-fields as a basis for $H$. It seems a reasonable ansatz to imagine separating variables (as in a Sturm-Liouville problem) finding these eigen-fields counted  by the corresponding eigenfunctions of the Riemann-Laplace operator on functions  defined on a least-area hypersurface $Y= m(A)$. We will assume this. There is a beautiful discovery of Herman Weyl's \cite{Weyl} that the number of harmonic eigenfunctions up to a certain wavelength  is very nearly the $d-1$ volume of $Y$ measured in units of that wavelength, and further the next order (monotone) correction is given by the average scalar curvature $\kappa$ of $Y$. From \cite{MR2346276} we extract the following asymptotic information by integrating their pointwise estimates over $Y$. To match their notation set $d-1=n$.

Let $s_n$ be the volume of the unit ball in $R^n$, and define $N(\lambda)$ as the number of eigenvalues of the Laplacian on $Y$ whose square root is less than $\lambda$. Then
\begin{equation}
N(\lambda) = \frac{Vol(Y)s_n}{(2\pi)^n} \lambda^n + R(\lambda)\,, 
\end{equation}
where the error term $R(\lambda)$ has the form:
\begin{equation}
R(\lambda) = \frac{\kappa}{6\Gamma(\frac n2)}\lambda^{n-2} + c_{n-4}\lambda^{n-4} + \cdots+c_{(1,2)}\lambda^{(1,2)}+R^{\rm osc}(\lambda)\,,
\end{equation}
where $(1,2)=1$ if $n$ is odd and $2$ if $n$ is even, and $R^{\rm osc}(\lambda)$ is the oscillatory piece of the expansion. For generic metrics there is a lower bound, $R^{\rm osc}(\lambda)=\Omega(\lambda^{(n-1)/2})$, but for certain metrics, e.g.\ the round sphere, $R^{\rm osc}(\lambda)$ is much larger; there, $\limsup_{\lambda\to\infty}R^{\rm osc}(\lambda)/\lambda^{n-1}>0$ (but finite).

Weyl's Law, as it is called, allows us to regularize the Hilbert space $H$ to finite dimensions by applying  a Planck-scale cutoff  $\lambda$ to the basis. Thus $H$ becomes the Hilbert space of vibrational modes of (divergenceless vector) fields running from $A^c$ to $A$.  The dimension of $H$ is essentially the area of $m(A)$ in Planck units. To interpret this dimension via the RT formula it is natural to pass to the fermionic Fock space $F$ of $H$. $F$ has dimension $2^{\dim H}$, which is convenient because then entropy of a random vector in a Hilbert space may be normalized to  $\log_2(\text{dim of Hilbert space})$. (Technically it requires infinite information to specify a vector exactly but the above formula works perfectly to distinguish basis vector within a fixed basis and has the correct formal properties with respect to disjoint union of physical systems/tensor product of Hilbert spaces.) Now $\log_2(\dim F) = \dim H  \approx  \area(m(A))$ so RT is telling us that the EE between $A$ and $A^c$ is approximately the entropy of a fermionic system generated by the vibrational modes (up to Planck cutoff) of the divergenceless vector fields  propagating from $A^c$ to $A$. A many-body state here expresses the entanglement in the holographic dual. This, finally, is an apples-to-apples, entropy-to-entropy, comparison. 

There is a final hint to follow up on. In the simplest gravity theories where the Einstein-Hilbert action is extended to higher-order terms in the curvature tensor, namely Gauss-Bonnet gravity, it is believed \cite{Hung:2011xb,deBoer:2011wk} that the area on the gravity side of the RT formula area should be replaced with $\int dA( 1+\alpha R)$ (and minimization should be done with respect to this functional).  Weyl's law, available for use in this harmonic viewpoint,  suggests an explanation.  If what must be counted are not units of area but harmonic functions, then the appearance of this new functional looks natural. In the functional above, the scalar curvature $R$ is measured not at the Planck scale but at an intermediate string scale a dozen orders of magnitude larger, so to really claim that Weyl's law can explain RT for Gauss-Bonnet gravity, the appearance of the string scale needs to make sense. The bulk geometry, even if defined up to the Planck scale, is the base manifold of a string field theory, so it is reflected in that theory only insofar as  it can be probed by strings.  Curvatures between the Planck scale and the string scale  may  largely decouple from the bulk theory. Perhaps this is the start of an explanation.

\section{Open questions}\label{sec:discussion}

We close with a series of open questions concerning the bit-thread picture of holographic entanglement. We've already touched on a few of these in the previous sections.

\subsection{Constraints on holographic states}

It remains to find flow-based proofs of the monogamy of mutual information (MMI) inequality \eqref{MMI} \cite{Hayden:2011ag,Headrick:2013zda} and its generalizations to more than three regions \cite{Bao:2015bfa}. We showed in subsection \ref{sec:threeregions} that MMI \emph{cannot} be proved using just the nesting property and other basic properties of flows. Therefore, this inequality reflects some other property of flows, currently unknown to us. This is not just a technical problem. The meaning of MMI and its generalizations---what do they tell us about the special entanglement structure of holographic states?---remains obscure. Since the flows (or bit threads) provide a visual representation of this entanglement structure, it seems likely that a flow-based proof of the inequalities would help us to understand the meaning of MMI.

This is closely related to a second question, touched upon in subsection \ref{sec:bitthreads2}: What types of quantum states admit a bit-thread representation? In other words, how (if at all) are holographic states constrained by the fact that they admit such a representation?

\subsection{Generalizations of Ryu-Takayanagi}

The Ryu-Takayanagi formula, and therefore its max flow formulation and the accompanying bit-thread picture developed in this paper, apply within a certain regime: the bulk should be governed by classical Einstein gravity, it should be in a static state, and the region $A$ should lie on a constant-time slice of the boundary. Thus the RT formula can be generalized in at least three directions, by relaxing the static, Einstein, and classical conditions respectively.

The covariant generalization of RT, the Hubeny-Rangamani-Takayanagi (HRT) formula \cite{Hubeny:2007xt}, replaces the minimal surface with an extremal (spacelike codimension-2) surface. The flow version of HRT will be elucidated in forthcoming work \cite{covariantflows}.

Higher-derivative corrections to Einstein gravity include, for example, $\alpha'$ corrections in string-theory realizations of holography. In the special case of Lovelock gravity, it is believed that the RT formula is corrected by replacing the area functional by a functional that is essentially the lower-order Lovelock functional on the surface \cite{Hung:2011xb,deBoer:2011wk}. For example, a Gauss-Bonnet correction to the gravitational action adds an Einstein-Hilbert term to the area functional; the total is then minimized to give the entanglement entropy. In subsection \ref{sec:Weyl}, using the Weyl law for the distribution of Laplacian eigenvalues on a manifold, we gave a possible generalization of max flow-min cut (MFMC) that would naturally incorporate such corrections. For more general higher-derivative corrections, the appropriate generalization of the RT formula is not known (see however the attempt \cite{Dong:2013qoa}). Going even farther afield, one can consider EEs in duals of higher-spins fields, where the RT formula appears to be replaced by some sort of bulk Wilson line \cite{deBoer:2013vca,Ammon:2013hba,Castro:2014mza}, and ask what the analogue of the max flow would be in that case.

\begin{figure}[tbp]
\centering
\includegraphics[width=0.6\textwidth]{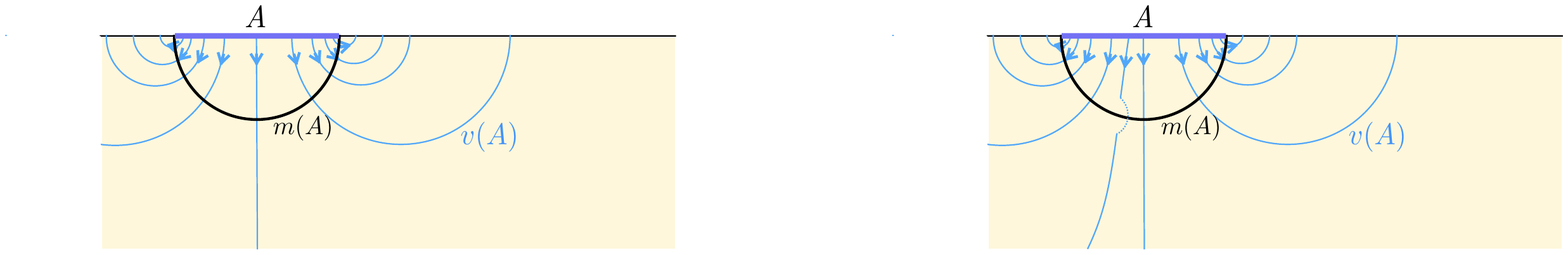}
\caption{\label{fig:quantum}
Conjectured bit-thread representation of quantum corrections to entanglement entropy. In addition to the order-$N^2$ threads that are required to be continuous, there are of order $N^0$ threads that can jump from one part of the bulk space to another, representing entanglement of bulk quantum fields. The ones that jump across the minimal surface $m(A)$ contribute to the max flow on $A$ and therefore to $S(A)$. Such a jump is shown as the dotted part of one of the threads in the figure.
}
\end{figure}

Quantum corrections are controlled by $G_{\rm N}\hbar\sim 1/N^2$ (in the usual large-$N$ parlance). Thus the leading perturbative correction is of order $N^0$ (versus $N^2$ for the leading term). At this level, one has to be more careful than we have been in distinguishing between the continuous flows and the discrete threads. Presumably one also has to take into account the fact that the bulk metric is undergoing quantum fluctuations. Perhaps more interesting, however, is the fact that entanglement of the bulk quantum fields contributes to the boundary EE \cite{Faulkner:2013ana}. Specifically, at order $N^0$ the contribution is given by the EE of the fields in the bulk spatial region $r(A)$ defined by the homology constraint, $\partial r(A) = A\cup m(A)$ (the so-called ``entanglement wedge'' \cite{Headrick:2014cta}). (At this order, the bulk fields should be treated as free.) How can we reproduce this contribution using bit threads? A natural guess is that the effect of bulk entanglement is to allow the threads to jump from one part of the bulk to another. (These threads can be thought of as traversing Planck-sized wormholes connecting distant parts of the bulk, perhaps in the spirit of Maldacena and Susskind's ``ER = EPR'' proposal \cite{Maldacena:2013xja}.) Some threads would thus be able to ``tunnel'' across the bottleneck $m(A)$, increasing the maximum number that can leave $A$ and thereby increasing $S(A)$ (see fig.\ \ref{fig:quantum}). As usual, there would not be a single thread configuration, but rather a collection of them, accounting for the fact that the entanglement structure of the bulk quantum fields does not just consist of a set of pairwise entangled localized bits. It would be interesting to make this rather speculative idea more concrete and quantitative.

\subsection{Emergent geometry}

Finally, let us return to the question posed in the first paragraph of this paper: What role does holographic entanglement play in the emergence of spacetime? The reconstruction of the spacetime metric from the EEs of boundary regions (and a related quantity called ``entwinement'') has been investigated recently by Czech and collaborators \cite{Balasubramanian:2014sra,Czech:2014ppa,Czech:2015qta}. The bit threads suggest a novel way to approach this question, which goes beyond the data contained in the boundary EEs. Suppose that the bulk space is initially given as a topological manifold, without a metric. The entanglement structure of the field theory would be expressed as a collection of thread configurations on this manifold. Saying that the threads have a cross-sectional area of 4 Planck areas then endows the manifold with a geometry. Specifically, the metric can be defined as the smallest that permits all the given thread configurations. In other words, space is propped open by the threads.

We can make the above picture mathematically precise. First, in the absence of a metric, a thread configuration is represented not by a divergenceless vector field $v$ but rather by a closed $(d-1)$-form $w$. Given a metric, the form can be converted into a vector field using the Hodge star: $v^\mu = g^{\mu\nu}(\star w)_\nu$. Therefore, in terms of $w$, the constraint $|v|\le1/(4G_{\rm N})$ on the density of threads translates to
\begin{equation}\label{constraint}
\frac1{(4G_{\rm N})^2} \ge v^2=w^2 = \frac1{(d-1)!}g^{\mu_1\nu_1}\cdots g^{\mu_{d-1}\nu_{d-1}}w_{\mu_1\cdots\mu_{d-1}}w_{\nu_1\cdots\nu_{d-1}}\,.
\end{equation}
Whereas before the metric was taken as fixed and this inequality was viewed as a constraint on $v$, we now take the components of $w$ as fixed and view it as a constraint on the metric. Given a collection of forms $W=\{w\}$, we let $g_{\mu\nu}$ at each point be the smallest-determinant positive symmetric matrix satisfying \eqref{constraint} for all $w\in W$. (This is itself a convex optimization problem.) This is what we meant above by the phrase, ``the smallest metric that permits all the given thread configurations''.

Of course, this construction leaves many questions unanswered. First and foremost is the question of where this primordial collection $W$ of thread configurations comes from, and how it is related to the entanglement structure of the dual field theory. At a minimum it should reproduce the EE of all boundary spatial regions. This requires $W$ to include at least one max flow for each region. However, in principle $W$ could include many more configurations than that. Including enough thread configurations would allow the entire bulk metric to be determined via the above construction, including inside so-called ``entanglement shadow'' regions, where RT minimal surfaces do not pass \cite{Engelhardt:2013tra,Balasubramanian:2014sra,Freivogel:2014lja}. A fundamental technical question is what property must $W$ satisfy for the resulting metric $g_{\mu\nu}$ to be smooth (or even continuous). Knowing this would help us decide which forms $w$ to place in $W$. 

More ambitiously, one would want not just the geometry but also the topology of the bulk to be emergent, i.e.\ for the manifold itself to be determined by a set of threads living in some more abstract space. We leave the exploration of these questions to future work.

\acknowledgments

We would like to thank N. Bao, R. Bryant, B. Czech, S. Hartnoll, A. Lawrence, H. Liu, R. Myers, J. Preskill, X. Qi, M. Rangamani, E. Tonni, and B. Yoshida, and especially P. Doyle, P. Hayden, V. Hubeny, F. Morgan, and J. Sullivan, for very useful discussions.

This work was initiated at the Aspen Center for Physics, which is supported by National Science Foundation grant PHY-1066293, and developed at the Simons Symposium on Quantum Entanglement and at the Kavli Institute for Theoretical Physics, which is supported in part by the National Science Foundation under Grant No.\ NSF PHY11-25915. The work of M.H. was supported in part by the National Science Foundation under Career Award No.\ PHY-1053842.

\appendix

\section{Max-flow min-cut on manifolds}\label{sec:mfmc}

We put the max flow/min cut principle into a larger context.  For us the primary context, a divergenceless flow and least area hypersurfaces dual to it, can be extended in two different directions (and historically each direction had separate origins \cite{FF, EFS, HL, Federer69}).  One direction, ``calibrations,'' considers pairs (closed $p$-form, $p$-submanifold) which are, in a sense, minimal with respect to each other \cite{WikiCG}.  The second direction is to pass from smooth flows to discrete networks and treats discrete analogs of flux and cross-sectional area.  Both extensions are of interest in gravity.  For example, $p = d-2$ is a very natural case when considering surfaces in a $d$-dimensional space-time bounding a cut dividing the constant time holographic dual into two pieces $A$ and $B$.  (The math literature does not seem to have considered calibrated geometry in mixed signature, so the standard results, reviewed here, are for Euclidean signature and will require some re-consideration in the Lorentzian case.)  Discretization is also a natural direction if one wishes to interpret the flow as quantum mechanical entanglement between $A$ and $B$, in which case the flow might be replaced with a quantum circuit \cite{Cui:2015pla}.

\subsection{Relation to calibrations}

On a Riemannian $d$-manifold we may use the metric to convert a $p$-form $w$ to a ($d-p$)-vector (field) $v$.  In the special case that $p=d-1$, the condition that $w$ is closed, $d w=0$, is equivalent to $\operatorname{div}(v) = \nabla\cdot(v) = 0$.  While our primary interest has been a divergenceless vector field $v$ and its flux through hypersurfaces, i.e., the case $p=d-1$, we will make some general statements now for $p$-forms $p<d$ and their integrations over $p$-submanifolds, or integral currents, and return later to the special case $p=d-1$. An integral $p$-current is essentially an oriented rectifiable set of dimension $p$ with integral weights, thought of as a functional on $p$-forms via integration. If this functional annihilates exact forms the underlying rectifiable set is thought of as a closed ``singular $p$-submanifold''. Regularity theory discusses how bad the singularities must be to realize the infimum of $p$-area within the integral homology class $\alpha\in H_p(M;\Z)/$torsion, where $M$ is the ambient manifold. Integral currents have the merit that their underlying rectifiable sets have good compactness properties, guaranteeing the existence of minimizers.

A $p$-calibration is a closed real-valued $p$-form $w$ so that at every point $\left|w(v)\right|\leq 1$, where $v$ is a unit norm \emph{geometric} $p$-vector.  The word \emph{geometric} means an element of the Grassmannian $G_{p,d}$ rather than an element of formal $p$-planes, $\wedge^p\left(T_M\right)$, the $p^\text{th}$ exterior power of the tangent bundle.  The norm on $p$ planes, geometric or otherwise, is induced by taking $\{e_{i_1}\wedge\cdots\wedge e_{i_p}\}$ as an orthonormal basis, $i_1<i_2<\cdots<i_p$, and $\{e_i\}$ an orthonormal basis of $T_M$.  Thus norm $||w||$ is an $L^\infty$-norm.  Since it is pointwise and is defined only through its evaluation on geometric $p$-vectors, it could be denoted $||\;||^\infty_g$ but we sometimes use $||\;||$.  As an example in Euclidean $4$-space $E^4$, $w = d_{x_1}\wedge d_{x_2} + d_{x_3}\wedge d_{x_4}$ is a calibration even though $w\left(\frac{1}{\sqrt{2}} e_1\wedge e_2 + \frac{1}{\sqrt{2}} e_3\wedge e_4\right) = \sqrt{2}$.  The point is that $\left(\frac{1}{\sqrt{2}} e_1\wedge e_2 + \frac{1}{\sqrt{2}} e_3\wedge e_4\right)$, although of unit norm, is \emph{not} a \emph{geometric} $2$-vector (i.e., plane).

One says an oriented $p$-submanifold $P$ is \emph{calibrated} by $w$ if for all $x\in P$, $w(v_x) = 1$ for $v_x$ the oriented unit $p$-area $p$-vector tangent to $P$ at $x$.

If $w$ calibrates $P$, and $P$ has finite $p$-area, then $P$ absolutely minimizes area (let us drop the $p$- in $p$-area) for any (weighted) submanifold $Q$ (real-)homologous to $P$.  (Proof: $\operatorname{area}(P) = \int_P 1\;da = \int_P w = \int_Q w \leq \int_Q 1\; da = \operatorname{area}(Q)$).  There is a converse to this but it is surprisingly weak.  One must absorb some interesting counterexamples to appreciate why.

First, to understand the role of real coefficients, consider an example in the product of circle and a $2$-sphere $S^1\times S^2$.  Embedded a loop $\gamma$ which represents twice the generator of $H_1\left(S^1\times S^2;Z\right)$.  Now multiply the product metric on $S^1\times S^2$ by a conformal factor $e^{-\lambda(x)}$ where the function $\lambda(x)$ becomes extremely large near $\gamma$.  For such a metric,
\begin{equation}\label{eq:ast}
\operatorname{length}(\gamma) \ll \operatorname{inf}\left(\operatorname{length}\left(\alpha^\prime\right)\right)\tag{$\ast$}
\end{equation}
where $\alpha^\prime$ is any loop representing the generator of $H_1\left(S^1\times S^2;\Z\right)$.  Now let $\alpha$ be a shortest closed geodesic realizing the generator; $\operatorname{length}(\alpha)=\operatorname{inf}\left(\operatorname{length}\left(\alpha^\prime\right)\right)$. For generic metrics, $\alpha$ will be simple (non self-intersecting) so we assume this.  $\alpha$ cannot be calibrated by any $1$-form $w$, otherwise
\[\operatorname{length}(\gamma) \geq \int_\gamma w = 2\int_\alpha w = 2\int_\alpha 1 = 2\operatorname{length}(\alpha),\]
contradicting (\ref{eq:ast}).

There is a more interesting expression of this phenomenon (due to Young, White, and Morgan \cite{Young, White, Morgan}).  For any integer $n\geq 2$, there are smooth simple closed curves $\beta$ in $E^4$ so that the minimal area of a bounding surface for the multiply weighted $n\beta$ is less than $n$ times the minimal area of a bounding surface for $\beta$.

These examples show that one should only attempt to calibrate submanifolds which minimize area within their \emph{real} homology class.  Fortunately when $p=d-1$, regularity theorems \cite{Federer74, MR809969} say that real coefficient $p$-area minimizers (which exist for any $p$ as integral currents) are in fact linear combinations of smooth submanifolds for $d\leq 7$, $p=d-1$, and with integral weights become minimizers for integral classes.  For $d\geq 8$, the Simons phenomenon appears \cite{BDG,MR0216387}, and for certain metrics codimension one minimizers will be singular. Furthermore\footnote{This was pointed out to us in \cite{Chodosh}.} when $d=8$ it is proved in \cite{MR1243523}, using the local analysis of \cite{BDG} and \cite{MR809969}, that for generic metrics codimension one minimizers will be smooth manifolds. For $d>8$, it is conjectured in \cite{MR1243523}, but remains open, that for generic metrics, codimension one minimizers are smooth submanifolds. It is known \cite{MR809969} that the singularities of such minimizers may be assumed to have Hausdorff dimension $\le p-7$.

We now address the question of whether a smooth, real, area minimizer $P$ can be calibrated.  The most basic context is to work within a closed manifold $M^d$, but in many applications (e.g., gravity), one is interested in noncompact $M$ and $P$.  In this case, ``least area'' means that $P$ cannot be modified in a compact region to a homologous $P^\prime$ with lower area. (Note: In discussing infinite area surfaces this seems to be the best we can do, whereas in the dual context of flows, as we saw in section \ref{sec:reformulation2}, absence of an augmenting path provides a sharper notion.) Our discussion applies to both cases although our notation generally presumes the former.

\subsection{Regularity}

Regularity questions are central, and a good general introduction is \cite{MR2455580}.  We make no attempt here at rigor but merely outline some important ideas.  The broadest generalization of submanifold is \emph{current}.  We need the following notations:
\begin{itemize}
    \item $M$ is a complete Riemannian manifold.
    \item $\wedge^p(M)$ are $p$-forms on $M$, initially smooth but this property may be lost as limits are taken.
    \item $\mathcal{C} = \left(\wedge^p\right)^\ast$ is the dual space called \emph{currents}.
    \item $I_p$ denotes integral $p$-currents. These are the functionals obtained by integrating $\wedge^p$ over an oriented rectifiable $R$ with integral weights. A rectifiable set $R$ by definition is the disjoint union of a countable number of oriented Lipshitz charts from $\R^p$ together with a ``negligible'' piece of $p$-Hausdorff measure 0. This is a nice type of current on which one can integrate forms, but generally less regular than a submanifold.  (For example, general currents also include analogs of $\delta^\prime$, derivatives of delta functions, along a submanifold; these are not integral currents. Oriented submanifolds with low dimensional singularities can still be integral currents.)
\end{itemize}

Motivated by Stokes' theorem, an integral current $c$ is called \emph{closed} if it annihilates exact forms $c\left(d\alpha^{p-1}\right) = 0$. In this case $c$ defines a homology class $[c]\in H_p(M;\Z)/$torsion. Closed integral currents (actually their underlying rectifiable set $R$) have a well-defined $p$-area and a minimizer $c'$ is known to exist in the class $[c]$.

For $p\leq d-2$ there are smooth submanifolds $P$ absolutely minimizing area (even among closed integral currents) in their real homology class, for which no calibrating form can be a smooth or even continuous section of the $p$-form bundle
\[\xymatrix{\wedge^pT^* M \ar@{-}[r] & E \ar[d] \\ & M}.\]
This problem exists even locally in a neighborhood of $P$ \cite{Zhang}.  (While smooth calibrations do not always exist, it is possible that they may exist for generic metrics for all dimensions and codimensions.)

The regularity problem is not too severe for $p=d-1$.  For $d\leq 7$, minimizers for classes in $H_{d-1}(M;R)$ exist and are supported on a smooth submanifold $P$ which (by reweighting) are minimizers for the corresponding class in $H_{d-1}(M;Z)/$torsion.  Furthermore $P$ is calibrated by a smooth $1$-form $w$.  In the dual language of vector fields, if $P$ is least area in its class in $H_{d-1}(M;Z)/$torsion, there exists a vector field $v$, with $||v||^\infty=1$, $\nabla\cdot v = 0$, and $\operatorname{flux}_P(v) = \int_P da\; v_\text{normal} = \operatorname{area} P$, i.e., $v$ is unit normal to $P$.  
As remarked earlier, these statements hold for $d=8$, for generic metrics, and have also been conjectured for $d>8$ again for generic metrics \cite{MR1243523}. This is the smooth max flow/min cut (MF/MC) theorem:

\begin{theorem}\label{theoremA1}
$P^p\subset M^d$, $p=d-1$, is a submanifold in an oriented ambient manifold minimizing area in its integral homology class.  Then there is a smooth divergenceless vector field $v$ with the pointwise norm $||v(x)||\leq 1$, $x\in M$, $\operatorname{flux}_P(v) = \operatorname{area}(P)$.  (Clearly $p$-$\operatorname{area}(P)$ serves as an upper bound to flux.)  Furthermore the set of $v$ satisfying the theorem is clearly convex: $tv + (1-t)v^\prime$, $0\le t\le1$, will be a max flow whenever $v$ and $v^\prime$ are.
\end{theorem}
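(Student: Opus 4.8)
The plan is to read the statement as the assertion that a codimension-one area minimizer can be \emph{calibrated}, and to obtain the calibration by convex duality applied to the ``min cut'' problem in relaxed form. First I would fix the dictionary: for a vector field $v$ put $w:=\star(v_\mu\,dx^\mu)$; then $\nabla\cdot v=0$ is equivalent to $dw=0$, and---this is special to codimension one, where every $(d-1)$-vector is simple---the bound $\|v\|\le1$ is equivalent to the comass bound $|w(\xi)|\le1$ on unit simple $(d-1)$-vectors $\xi$. (In higher codimension these conditions differ, which is precisely the source of the Young--White--Morgan and Simons counterexamples quoted above.) Moreover $\operatorname{flux}_P(v)=\int_P w$, and the calibration inequality $\int_P w\le\area(P)$ is just the codimension-one instance of \eqref{bound}; the content of the theorem is therefore the reverse inequality: the existence of a closed $(d-1)$-form of comass $\le1$ whose integral over $P$ equals $\area(P)$. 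I would also note that in this regime the regularity theory makes the integral minimizer coincide with the real one, so $[P]$ may be taken non-torsion---otherwise $\area(P)=0$ and the claim is vacuous.

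To produce the form, pass to the relaxed min-cut problem. Work on the infinite cyclic cover $\pi\colon\widetilde M\to M$ determined by the class Poincar\'e dual to $[P]$ (trivial, i.e.\ $\widetilde M=M$, exactly when $P$ already separates); fix one lift $\widetilde P$ of $P$, let $A$ be the region on one side of it, and minimize $\mathcal F[\psi]:=\int_{\widetilde M}|\nabla\psi|\,dV$ over $\psi\in BV(\widetilde M)$ with $\psi\to1$ toward one end and $\psi\to0$ toward the other. By the coarea formula almost every superlevel set $\{\psi>t\}$ is homologous to $A$ and projects under $\pi$ to a competitor for $P$ in $M$, so by the minimality of $P$ one has $\mathcal F[\psi]\ge\area(P)$, with equality at $\psi^\ast=\chi_A$; hence $\min\mathcal F=\area(P)$, and $D\psi^\ast$ is the vector-valued measure $\nu\,d\mathcal H^{d-1}$ on $\widetilde P$, with $\nu$ the unit normal. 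Now dualize: the total-variation identity $\int_{\widetilde M}|\nabla\psi|=\sup\{\int_{\widetilde M}\psi\,\nabla\cdot v\,dV : v\in C^1_c,\ \|v\|\le1\}$, together with the conditions at the ends, identifies the dual of $\mathcal F$ with $\sup\{\operatorname{flux}_{\widetilde P}(v):\nabla\cdot v=0,\ \|v\|\le1\}$. The primal has a feasible point of finite objective (any smooth interpolating $\psi$), so Fenchel--Rockafellar strong duality applies: the two optima agree, and complementary slackness produces an optimal $v$---a priori only a divergenceless $L^\infty$ field with a well-defined normal trace on hypersurfaces---with $\langle\nu,v\rangle=1$ at $\mathcal H^{d-1}$-a.e.\ point of $\widetilde P$. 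The bound $\|v\|\le1$ then forces $v=\nu$ along $\widetilde P$, so $\operatorname{flux}_{\widetilde P}(v)=\area(P)$; pushing $v$ down to $M$ (the optimizer may be chosen $\Z$-invariant, or one simply works on $M$ in the separating case) gives the asserted field.

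It remains to upgrade this $L^\infty$ field to a \emph{smooth} one, and to dispose of the convexity clause---which is immediate, since $\nabla\cdot v=0$ and $\operatorname{flux}_P(v)=\area(P)$ are linear constraints and $\|v\|\le1$ is pointwise convex, so $tv+(1-t)v'$ is again admissible for $0\le t\le1$. On $M\setminus P$ nothing constrains $v$ beyond $\nabla\cdot v=0$ and $\|v\|\le1$, so it can be mollified there; near $P$ it must match the smooth field $\nu$. I would extend $\nu$ to a smooth $v_0$ on a tubular neighborhood of $P$ with $\|v_0\|\le1$ and $v_0=\nu$ on $P$, kill its divergence by a correction supported off $P$---possible because the net flux of $v_0$ through the two faces of the tube is consistent, which is exactly what the $L^\infty$ solution certifies, using a standard solution operator for the divergence equation with a quantitative bound to preserve $\|v\|\le1$---and patch to the far region by a divergenceless interpolation. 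This last gluing, performed while holding the norm bound in force, is the one genuinely delicate step. The cleanest rigorous alternative is to bypass the hand construction and invoke the regularity theory already cited (\cite{Federer74,MR809969,MR1243523}): for $d\le7$, and for $d=8$ or (conjecturally) $d>8$ with generic metrics, a smooth codimension-one area minimizer in its integral homology class is calibrated---equivalently, carries a smooth divergenceless vector field of pointwise norm $\le1$ equal to its unit normal along $P$---which is precisely the assertion.
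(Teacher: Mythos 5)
Your argument is essentially correct in outline, but it reaches Theorem \ref{theoremA1} by a genuinely different route than the paper does. The paper does not construct the calibrating field at all: its ``Proofs'' subsection simply extracts the existence statement from Federer's work on normal and integral currents \cite{Federer74}, assembles the regularity of the minimizer from \cite{BDG,MR809969,MR1243523}, and sketches a second, independent route via discrete max flow--min cut (Menger's theorem \cite{Menger}, Ford--Fulkerson, and a limit over combinatorial approximations \cite{Sullivan}). What you do instead is run Fenchel--Rockafellar strong duality on the $BV$-relaxed min-cut functional $\mathcal{F}[\psi]=\int\sqrt{g}\,|\nabla\psi|$, using the coarea formula to identify its minimum with $\area(P)$ and complementary slackness to force $v=\nu$ along $P$; this is essentially Strang's ``maximal flow through a domain'' \cite{MR700642}, which the paper cites as a source for MFMC and explicitly flags in a footnote of section \ref{sec:reformulation2} as the convex-relaxation/strong-duality picture to be spelled out in \cite{covariantflows}. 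Your approach buys a self-contained existence proof of the dual optimizer and a clean mechanism (the infinite cyclic cover for non-separating $P$, the comass dictionary special to codimension one) for reducing the theorem to standard convex analysis; the paper's citation-based approach buys the one thing yours still lacks, namely the \emph{smoothness} of $v$. That is the one soft spot in your write-up: duality hands you only an $L^\infty$ divergence-free field with a normal trace, and your proposed mollify-and-patch upgrade near $P$ is not carried out (mollification preserves the norm bound and divergence-freeness but does not obviously preserve the exact flux on a curved manifold, and de Rham smoothing of the associated closed form can violate the comass bound by an $\epsilon$ you cannot afford). You recognize this and fall back on the same regularity literature the paper leans on, so the two treatments ultimately converge on the same unproved-in-text step; but you should present the smooth-calibration existence as the imported ingredient rather than as a ``cleanest alternative'' to a construction that was never completed.
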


\paragraph{\it{Non-closed manifolds.}}
If $M$ has a boundary $\partial M$ with the weak convexity property that its mean curvature vector is never pointing outward, one may consider $\left(P^p, \partial P^p\right)\subset\left(M^d,\partial M^d\right)$, $p=d-1$, minimizing area with respect to free boundary conditions, for a class in $H_p(M,\partial M; Z)$.  Such a hypersurface is again associated to a vector field $v$ with $\nabla\cdot v=0$, $||v(x)||\leq 1$, $x\in M$, and $\operatorname{flux}_P(v)=\operatorname{area} P$. Similarly if $M$ is a complete Riemannian manifold and $\rho\in H_p(M,\infty;Z)$ then if $P$ is properly embedded in $M$, $[P] = \rho$, then there exists a vector field $v$ on $M$ with $\nabla\cdot v=0$, $||v(x)||\leq 1$, $x\in M$, $v(x)$ normal to $P$ and $||v(x)||=1$, $x\in P$.

Again, as in the closed case, all minimizers will be manifolds for $d\leq 7$.

\subsection{Proofs}

Given the existence of a manifold minimizer $P$, theorem \ref{theoremA1} can be extracted from \cite{Federer74}.  The regularity of the minimizer has a long history culminating in \cite{BDG, MR809969}.  We did not find fully generic statements for the bounded case in the math literature, but the arguments should be similar to the closed case.  Similarly the complete case might not be stated in the literature but should follow from the bounded case by exhausting $M$ by compact submanifolds $\{M_i,i=1,2,3,\ldots\}$.  $M_i\subset\operatorname{int} M_{i+1}$ and $\cup_{i=1}^\infty M_i = M$, with $\partial M_i$ of non-negative mean curvature. (One approach to constructing such an exhibition is to let $M_i$ solve the isoparametric problem: $M_i\subset M$ with $\operatorname{volume}\left(M_i\right) = v_i$, $v_i<v_{i+1}$, generic values approaching $\infty$.)  Use uniform curvature bounds to take a smooth Gromov limit of minimizers $P_i\subset M_i$.

There is a second, more computationally oriented approach to these results \cite{Sullivan}, which is to obtain then a limit from discrete graph-theoretic versions of MF/MC which begin with Menger's theorem \cite{Menger}.  (In an unoriented graph the number of edge disjoint paths between any vertices $s$ and $t$ is equal to the number $\left|E^\prime\right|$ of edges $E^\prime\subset E$ which must be deleted to separate $s$ from $t$.)  Discrete statements of MF/MC for oriented and weighted graphs are given on Wikipedia; we establish some terminology.  $G(V,E)$ is a graph with oriented edges $E = \{e\}$.  Let $w_i(e_i)\in\mathbb{R}^+\cup 0$ be the weights or \emph{capacity}, $f(e)\in\mathbb{R}^+\cup 0$ be the flow.  The goal is to maximize the flow from a source vertex $s$ to a sink vertex $t$.  The unoriented case may be mapped to the oriented case by replacing each edge $e$ with two $e_+$ and $e_-$ oppositely oriented edges of equal capacity.

\begin{enumerate}
    \item\label{itm:cc} \emph{capacity constraint}: $f(e)\leq w(e)$.
    \item\label{itm:div} \emph{$\operatorname{div} = 0$}: For $v\in V$, $v\neq s$, $t$, $\sum_{e\text{ into }v} f(e) = \sum_{e\text{ out }v} f(e)$.
    \item \emph{flux}: $\operatorname{flux}(f) = \sum_{e\text{ out }s} f(e) = \sum_{e\text{ into }t} f(e)$.
\end{enumerate}

Given a flow $f$, the \emph{residual network} $G_f(V,E)$ has the same vertices $V$ but with new weights $w_f(e) = w(e) - f(e)$; by convention, edges $e$ with $w_f(e) = 0$ are deleted.  The Ford-Fulkerson algorithm (FF) begins with any initial \emph{feasible} flow $f$, i.e., one obeying (\ref{itm:cc}) and (\ref{itm:div}), perhaps the identically zero flow.  As long as there is \emph{some} path $\gamma$ in $G_f$ from $s$ to $t$, the algorithm sends the maximum permissible additional flow from $s$ to $t$ along $\gamma$.  Such a $\gamma$ is called an \emph{augmenting path}.  If all weights are rational numbers, FF terminates with a MF in finitely many steps.  With general real weights, FF still converges to a MF if the time steps are indexed over the ordinals.  (Of course indexing over the ordinals violates the usual rules for an \emph{algorithm}.  With general real weights, other algorithms based on linear programming converge to a MF in polynomial time.)  Two important references for MF/MC are the 1956 papers \cite{EFS} and \cite{FF}.

For us, the essential consequence of FF is that if $f$ is a feasible flow on a graph $(G;s,t)$, setting $\Delta = (\operatorname{max flow}(G;s,t) - \operatorname{flux}(f))$, then there is a max flow $f^{\prime\prime} = f + f^\prime$ for $(G;s,t)$ where the pointwise norm of $f^\prime$ satisfies $\left|f^\prime\right|^\infty\leq\Delta$.  We call this the \emph{extension principle}.  Roughly this says that FF is a greedy algorithm.

\subsection{Continuity}\label{continuity}

\paragraph{\it{Graph case.}}
We previously observed that max flows for any fixed problem (in both the graph and Riemannian context) form a convex set $X$.  But now suppose there is a continuous variation in the problems considered. First the graph case: Let $(G;s,t)(\tau)$ have weights which are a function of a parameter $\tau$ from any reasonable space; for notational convenience, take $\tau\in\mathbb{R}$, the reals.  Our notation for a parametric MFMC problem will simply be $G_\tau$.  We now show that it is possible to select max flows $v_\tau$ continuously as $\tau$ varies.  This is in contrast to the physical position of a minimal cut edges $E_\text{cut}$ which may jump as one constriction may evolve to be suddenly more acute than another, as in an energy crossing in a first order phase transition.

Let $K_\tau$ be the convex set of MF solutions at ``time'' $\tau$.  To establish the existence of a continuous selection $v_\tau\subset K_\tau$, we show something stronger.  We thank Peter Doyle for a discussion on this lemma.

\begin{lemma}\label{lem:continuity}
Given a continuous family of weights, $w_{i,\tau}$, on the edges of an (oriented or unoriented) graph with source and sink $(G;s,t)$ the convex set of max flow solutions $K_\tau$ evolves (Lipschitz) continuously in the Hausdorff topology on closed subsets of Euclidean space.
\end{lemma}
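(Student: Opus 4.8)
The plan is to prove the stronger statement that the set-valued map $w\mapsto K(w)$, sending a nonnegative edge-weight vector to the polytope of max flows in $\mathbb{R}^E$, is globally Lipschitz in the Hausdorff metric, with constant depending only on the combinatorics of $G$, using the $\ell^\infty$ norm on both sides; the lemma then follows by precomposing with the path $\tau\mapsto w_\tau$. Two preliminaries set this up. First, $K(w)$ is always a nonempty compact polytope: it is the optimal face $\{f:\nabla\cdot f=0,\ 0\le f\le w,\ \operatorname{flux}(f)=F(w)\}$ of the (nonempty, bounded) feasible polytope, where $F(w)$ is the max-flow value, which is attained. Second, $F$ is $|E|$-Lipschitz: by max flow--min cut, $F(w)=\min_S\operatorname{cap}_w(S)$ over the finitely many $s$-$t$ cuts $S$, and each capacity $\operatorname{cap}_w(S)=\sum_{e\in\delta^+(S)}w(e)$ is $|E|$-Lipschitz in $\|\cdot\|_\infty$, so the finite minimum is too.

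The heart of the argument is a ``repair-and-refill'' estimate. Fix $w,w'$, put $\delta:=\|w-w'\|_\infty$, and take $f\in K(w)$; I will produce $f''\in K(w')$ with $\|f''-f\|_\infty\le 3|E|\delta$. \emph{Repair:} since $f(e)\le w(e)\le w'(e)+\delta$, the flow $f$ exceeds the capacities $w'$ by at most $\delta$ on each edge. Decompose $f$ into at most $|E|$ directed $s$-$t$ paths and cycles; processing the over-capacity edges one at a time, scale down the decomposition pieces through each just enough to bring it to capacity. Because scaling a piece down only decreases the flow on all of its edges, each of the $\le|E|$ steps reduces the total piece mass by at most $\delta$, nonnegativity is preserved, and no previously repaired edge is disturbed; the result is a flow $\tilde f$ feasible for $w'$ with $\|\tilde f-f\|_\infty\le|E|\delta$ and $\operatorname{flux}(\tilde f)\ge\operatorname{flux}(f)-|E|\delta=F(w)-|E|\delta\ge F(w')-2|E|\delta$, the last step using the $|E|$-Lipschitz bound on $F$. \emph{Refill:} now $\tilde f$ is a feasible flow for $(G;s,t)$ with capacities $w'$ whose deficit $\Delta:=F(w')-\operatorname{flux}(\tilde f)$ is at most $2|E|\delta$, so by the extension principle there is a max flow $f''=\tilde f+f'$ for those capacities with $\|f'\|_\infty\le\Delta$. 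Then $f''\in K(w')$ and $\|f''-f\|_\infty\le\|f'\|_\infty+\|\tilde f-f\|_\infty\le 3|E|\delta$. Exchanging the roles of $w$ and $w'$ gives the reverse inclusion, so $d_H(K(w),K(w'))\le 3|E|\,\|w-w'\|_\infty$; applied along the path this yields $d_H(K_\tau,K_{\tau'})\le 3|E|\,\|w_\tau-w_{\tau'}\|_\infty$, hence continuity in general and Lipschitz continuity when $\tau\mapsto w_\tau$ is Lipschitz.

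I expect the repair step to be the main obstacle. The naive move---clipping $f$ to $\min(f,w')$---fails because it destroys $\nabla\cdot f=0$, so the reduction must be routed along entire paths and cycles of a flow decomposition; this is exactly the direction the extension principle does \emph{not} already handle for free. The bookkeeping that needs care is the bound $|E|$ on the number of decomposition pieces, the monotonicity observation that repairing one edge never worsens another (so the total scaled-down mass, and hence the $\ell^\infty$ displacement and the flux lost, is at most $\sum_e\delta=|E|\delta$), preservation of nonnegativity, and a remark about edges of vanishing weight, which are deleted in the residual-network conventions underlying the extension principle. Everything else is routine: convexity and compactness of $K(w)$ are immediate, and passing from a Lipschitz set-valued map to a Lipschitz path is trivial. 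Alternatively one could invoke Hoffman's lemma on the Lipschitz dependence of a polyhedron $\{x:Ax\le b\}$ on $b$, applied to the fixed incidence-and-box constraint system with $b=(0,0,w,-F(w))$, but the Ford--Fulkerson argument above is more in keeping with the constructive spirit of the preceding subsections.
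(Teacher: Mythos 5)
Your proposal is correct, and it shares the paper's two-step architecture: first make the old max flow feasible for the perturbed capacities, then invoke the extension principle (greediness of Ford--Fulkerson) to augment back up to a max flow by an augmenting flow whose sup norm is controlled by the flux deficit. Where you genuinely diverge is in the repair step. The paper repairs by a uniform multiplicative rescaling $\kappa f=(1-\epsilon)f$, which restores feasibility only once the perturbation of every positive weight is at most an $\epsilon$ fraction of that weight; the resulting Lipschitz constant therefore implicitly depends on the smallest relevant capacity (and the argument is really an $\epsilon$--$\delta$ continuity statement made Lipschitz by the linearity of the estimates along the given family). Your additive repair, routed along a flow decomposition into at most $|E|$ paths and cycles, avoids this dependence and yields the uniform bound $d_H(K(w),K(w'))\le 3|E|\,\|w-w'\|_\infty$ with a constant depending only on the combinatorics of $G$ --- i.e., Lipschitz continuity of $w\mapsto K(w)$ on all of $\mathbb{R}_{\ge 0}^E$, not merely along a fixed path. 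The price is the bookkeeping you correctly identify (bounding the number of decomposition pieces, the monotonicity ensuring repaired edges stay repaired, preservation of nonnegativity), which the paper's scaling trick sidesteps at the cost of a weight-dependent constant. Both arguments rest on the same key lemma, the extension principle; your closing observation that Hoffman's lemma on polyhedra $\{x:Ax\le b\}$ would give the Lipschitz dependence on $b$ directly is a valid, less constructive third route.
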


\begin{proof}
We must show that for every $\epsilon > 0$ there is a $\delta > 0$ so that $K_{\tau\pm\delta}\subset \mathfrak{N}_\epsilon\left(K_\tau\right)$, where $\mathfrak{N}_\epsilon(\;)$ denotes the $\epsilon$-neighborhood in the sup norm Hausdorff topology on MF solutions.  In figures below, \ref{fig:a} is a discontinuous change (does not occur) whereas \ref{fig:b} and \ref{fig:c} are continuous and may occur.

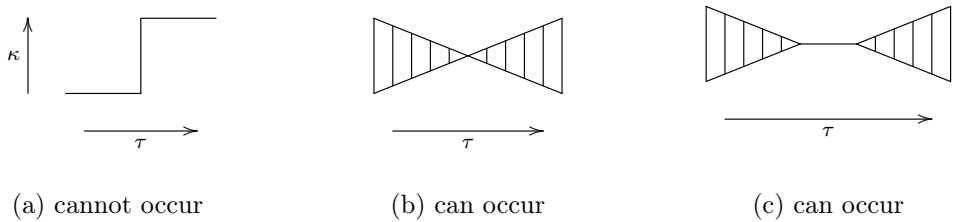
\begin{figure}[hbpt]
    \centering
    \begin{subfigure}[b]{0.3\textwidth}
        \[\begin{xy}<5mm,0mm>:
        \ar^\kappa (0,0);(0,2)
        \ar@{-} (1,0);(3,0)
        \ar@{-} (3,0);(3,2)
        \ar@{-} (3,2);(5,2)
        \ar_\tau (1.5,-1);(4.5,-1)
        \end{xy}\]
        \caption{cannot occur}
        \label{fig:a}
    \end{subfigure}
    \begin{subfigure}[b]{0.3\textwidth}
        \[\begin{xy}<5mm,0mm>:
        \ar@{-} (0,0);(0,2)
        \ar@{-} (0,2);(5,0)
        \ar@{-} (5,0);(5,2)
        \ar@{-} (5,2);(0,0)
        \ar_\tau (0.5,-1);(4.5,-1)
        \ar@{-} (0.5,1.8);(0.5,0.2)
        \ar@{-} (1,1.6);(1,0.4)
        \ar@{-} (1.5,1.4);(1.5,0.6)
        \ar@{-} (2,1.2);(2,0.8)
        \ar@{-} (3,1.2);(3,0.8)
        \ar@{-} (3.5,1.4);(3.5,0.6)
        \ar@{-} (4,1.6);(4,0.4)
        \ar@{-} (4.5,1.8);(4.5,0.2)
        \end{xy}\]
        \caption{can occur}
        \label{fig:b}
    \end{subfigure}
    \begin{subfigure}[b]{0.3\textwidth}
        \[\begin{xy}<5mm,0mm>:
        \ar@{-} (0,0);(0,2)
        \ar@{-} (0,2);(2.5,1)
        \ar@{-} (2.5,1);(0,0)
        \ar@{-} (2.5,1);(4,1)
        \ar@{-} (4,1);(6.5,2)
        \ar@{-} (6.5,2);(6.5,0)
        \ar@{-} (6.5,0);(4,1)
        \ar_\tau (0.5,-1);(6,-1)
        \ar@{-} (0.5,1.8);(0.5,0.2)
        \ar@{-} (1,1.6);(1,0.4)
        \ar@{-} (1.5,1.4);(1.5,0.6)
        \ar@{-} (2,1.2);(2,0.8)
        \ar@{-} (4.5,1.2);(4.5,0.8)
        \ar@{-} (5,1.4);(5,0.6)
        \ar@{-} (5.5,1.6);(5.5,0.4)
        \ar@{-} (6,1.8);(6,0.2)
        \end{xy}\]
        \caption{can occur}
        \label{fig:c}
    \end{subfigure}
    \caption{Vertical fibers are $K_\tau$}
\end{figure}

As $\tau$ perturbs the weights, say from $\tau=0$, the original flow $f$ may violate the new constraints, but $f$ can be continuously scaled back by constant factor $\kappa = (1-\epsilon)$, so that $\kappa f$ obeys the constraints $\{w_{i,\delta^\prime}\}$, $\delta^\prime\in[-\delta,\delta]$.  The new flow $\kappa f$ is not, in general, a MF on any $G_{\delta^\prime}$, but for every $\epsilon > 0$, there exists a $\delta > 0$, so that $\Delta = \left(MF\left(G_{\delta^\prime}; s,t\right) - \operatorname{flux} \kappa f\right)$ satisfies $\Delta\leq (1.1)\epsilon F$, where $F = \operatorname{flux}\left(MF\left(G_0;s,t\right)\right)$ and $\left|\delta^\prime\right| < \delta$.  (The factor $1.1$ is included to allow for a possible increase of weights, $w_{i,\delta^\prime} > w_i$, which can be made arbitrarily small by keeping $\delta$ small.)  Now by the extension principle above, we may find a max flow
\[f_{\delta^\prime} = \kappa f + f^\prime\text{, }||f^\prime||^\infty\leq\Delta.\]

This locates the new MF, $f_{\delta^\prime}$, close to the original $f$, establishing continuity of the solution set.  The linear nature of the estimates implies that the continuity is Lipschitz.
\end{proof}

\paragraph{\it{Riemannian case.}}
Next we consider the Riemannian analog of Lemma \ref{lem:continuity}.  Now instead of perturbing the weights of a graph, one perturbs a Riemannian metic $g_{ij}(\tau)$ and the boundary conditions $(A(\tau), B(\tau))$, where $A$ is a region on the sphere at infinity of the underlying time-slice of the holographic dual, at $B$ its complement.  In this case we must state a conjecture rather than a result.

\begin{conjecture}\label{conj:riemann}
Let $g_{ij}(\tau)$ be a continuous family of metrics on a Riemannian manifold $\left(M^d,\partial\right)$ where the boundary has a non-outward pointing mean curvature.  Let $C_\tau\subset\partial M$ be a $d-2$ manifold dividing some number: $0,1,2,\ldots$ of boundary components of $M$ into pieces $A$ and $B$, $A\cap B = C$.  Fix a class $x\in H_{d-1}(M,C;Z)/$torsion (note $C$ could be empty) and consider the max flow problem for flows dual to $x$ and
let $K$ be the (infinite dimensional) convex set of max flows.  We conjecture $K_\tau$ varies in a Lipschitz continuous fashion (as a subset of the Banach space of vector fields with the pointwise ($L^\infty$) norm) as a function of $\tau$.
\end{conjecture}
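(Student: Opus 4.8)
The plan is to imitate the proof of Lemma~\ref{lem:continuity} as closely as possible, substituting for the graph-theoretic extension principle (Ford--Fulkerson) a Riemannian analog. The first move is to trade the vector field for a differential form, so that only \emph{one} of the defining conditions is metric-dependent. Given a max flow at $\tau=0$ dual to the fixed class $x$, represent it by the closed $(d-1)$-form $w$ with $w\mapsto v$ via the Hodge star; then $dw=0$ and the flux $\int_Aw$ are metric-independent, while feasibility becomes the pointwise bound $\|w\|_{g(\tau)}\le 1$. Since $g(\tau)$ is $C^0$-close to $g(0)$ on (compact exhaustions of) $M$, we have $\|w\|_{g(\tau)}\le 1+o(1)$ uniformly, so $\kappa(\tau)w$ with $\kappa(\tau)=(1+o(1))^{-1}$ is a feasible flow for $g(\tau)$ differing from $w$ by $O(\tau)$ in sup norm — this is the Riemannian version of ``scaling back by $\kappa=1-\epsilon$''.

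Second, estimate the flux deficit $\Delta(\tau):=\mathrm{MF}(g(\tau))-\int_A\kappa(\tau)w$. The subtracted term equals $\kappa(\tau)\,\mathrm{MF}(g(0))$ by the Riemannian MFMC theorem (Theorem~\ref{theoremA1}), so it suffices to show the min-cut value $\mathrm{MF}(g(\tau))=\min_{m\sim x}\area_{g(\tau)}(m)$ is Lipschitz in $\tau$. This follows from a two-sided comparison: evaluating the $g(0)$-minimizer $m_0$ in the new metric gives $\mathrm{MF}(g(\tau))\le\area_{g(\tau)}(m_0)\le(1+O(\tau))\,\mathrm{MF}(g(0))$, and symmetrically, using that $\area_{g(\tau)}(m)=(1+O(\tau))\area_{g(0)}(m)$ uniformly over hypersurfaces because the induced volume densities are uniformly $C^0$-close. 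Hence $\Delta(\tau)=O(\tau)$.

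Third — and this is the crux — one needs the Riemannian \emph{extension principle}: a feasible flow whose flux falls short of the maximum by $\Delta$ can be promoted to an honest max flow by adding a divergenceless vector field of sup norm $\lesssim\Delta$. (In the noncompact/infinite-flux setting, ``falls short by $\Delta$'' must be read as ``admits no augmentation of size exceeding $\Delta$'', in the augmenting-path sense of Section~\ref{sec:reformulation2}.) I would try to obtain this as a limit from the graph case, following the computational route of \cite{Sullivan}: exhaust and triangulate $M$ by progressively finer weighted networks whose min cuts converge to $\area_{g(\tau)}(m(x))$ — using the regularity of codimension-one minimizers (unconditional for $d\le 7$, generic for $d=8$, conjectural beyond via \cite{MR1243523}) — apply the discrete extension principle on each, and extract a weak-$*$/Gromov limit of the augmenting currents using compactness of integral currents together with the uniform norm bound. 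An alternative is to pass through the strong-duality (convex-relaxation) picture sketched in the footnote to Section~\ref{sec:reformulation2}, where stability of the optimizer of a convex program under perturbation of its data is comparatively standard. Either way, once the extension principle is in hand, $\kappa(\tau)w+w'$ is a max flow for $g(\tau)$ lying within $O(\Delta)=O(\tau)$ of $w$ in $L^\infty$, so $K_{g(\tau)}\subset\mathfrak{N}_{C\tau}(K_{g(0)})$ and, by symmetry of the argument, $K_{g(0)}\subset\mathfrak{N}_{C\tau}(K_{g(\tau)})$; since the $L^\infty$ norms attached to $C^0$-close metrics are uniformly equivalent, this is exactly Lipschitz continuity in the Hausdorff topology.

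The main obstacle is precisely this third step: the max-flow PDE is underdetermined, so there is no canonical augmenting flow to track, and the real content is the \emph{uniform} bound $\|w'\|_\infty\lesssim\Delta$. Making the network-to-manifold passage rigorous requires that near-optimal discrete flows converge to near-optimal Riemannian flows \emph{without} loss of the sup-norm control, which again leans on the regularity theory for area minimizers — so, like Theorem~\ref{theoremA1} itself, any clean statement will be unconditional only for $d\le 7$. A secondary subtlety is the noncompact case: there one must phrase everything through augmenting paths rather than literal flux differences, and verify that the metric perturbation does not manufacture new augmenting paths ``at infinity''.
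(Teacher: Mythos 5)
The statement you are trying to prove is stated in the paper as a \emph{conjecture}, and the paper explicitly does not prove it: the authors write that ``unresolved analytical questions prevent us from stating it as a theorem,'' offer only partial evidence (the disjointness of distinct codimension-one minimizers), and then identify precisely where the graph proof breaks down in the continuum. Your proposal is a faithful outline of the intended strategy --- rescale the old max flow to make it feasible for $g(\tau)$, bound the flux deficit $\Delta(\tau)=O(\tau)$ via a two-sided comparison of min-cut areas (these two steps are fine and match the paper), then invoke an ``extension principle'' to repair the deficit with an augmenting flow of small sup norm. But the third step, which you correctly flag as the crux, is exactly the gap the paper says it cannot close, and your route to it (discretize, apply the network extension principle, take a limit ``using \ldots the uniform norm bound'') does not work as stated: there is no uniform norm bound to pass to the limit. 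The discrete estimate $\|f'\|^\infty\leq\Delta$ bounds the augmenting flow \emph{per edge} by the total flux deficit $\Delta$, which is a macroscopic quantity of order $\epsilon F$. As the mesh is refined, individual edge capacities tend to zero while the right-hand side of this bound stays fixed, so the estimate permits the entire deficit to be rerouted through a single thin tube; in the continuum $L^\infty$ norm (a flux \emph{density}, not a flux) such a configuration has arbitrarily large norm. This is the paper's stated obstruction: ``a small change in the metric near one site will reroute large amounts of flux at a faraway site.'' What is missing is a genuinely \emph{local} extension principle --- the paper suggests one would need to redistribute the augmenting flux among many nearly parallel augmentation paths (in the sense of the off-centered-unit-ball residual structure) so that the added vector field is small pointwise, and the details of this are open.

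Your fallback via strong duality does not rescue the argument either: stability of the \emph{optimal value} of a convex program under perturbation of the data is standard (and is essentially your step two), but Lipschitz continuity of the \emph{solution set} in Hausdorff distance is not a generic consequence of convexity --- for degenerate linear programs the argmax set can jump --- and in this infinite-dimensional, highly underdetermined setting that continuity is precisely the content of the conjecture, not something one gets for free. So the honest assessment is that your proposal reproduces the paper's reduction of the conjecture to the Riemannian extension principle with genuine $L^\infty$-density control, acknowledges that this is the hard part, but does not supply the new idea needed to prove it; the statement remains a conjecture.
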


We do understand a bit about this conjecture, but unresolved analytical questions prevent us from stating it as a theorem.  Let us start with some of what we do know:

Let $\{P_i\}$ be the minimizers (for fixed $\tau$); if indeed several exist then they are all pairwise disjoint.  This is an important point because if we had two minimizers $P_1$ and $P_2$ with $P_1\cap P_2\neq\emptyset$, we could easily manufacture a $\tau$-parameter family contradicting the conjecture. We will here assume $P_1$ and $P_2$ are smooth submanifolds. Indeed if $P_1\cap P_2\neq\emptyset$, they actually must cross transversely at some point $p$, and in fact such $p$ are open dense in $P_1\cap P_2$.  This is a standard folk theorem of minimal surface theory.  The idea is that locally near any intersection point $q$, one may write $P_2$ as the graph over $P_1$ of a function $h$ obeying an elliptic P.D.E.  The elliptic maximum principle implies that $P_2$ lies on both sides of $P_1$ near $q$ and, assuming the metric real analytic, $P_1\cap P_2$ will be a real analytic variety; any point $p$ in its top strata will be a point of transverse intersection.  A local estimate around such a point $p$ can predict in terms of the dihedral angles and higher derivative a lower bound for how much ($d-1$)-area an exchange/corner rounding argument would save if the exchanged regions met $p$.  Given this bound, $\delta$, one may perturb $P_1$ and $P_2$ into transverse position so all area changes are $\ll\delta$.  Thus, we may assume $P_1$ and $P_2$ are smooth ($d-1$)-submanifolds and intersect transversely.  Now consider $M\setminus \left(P_1\cup P_2\right)$.  We assume $M$ is connected but claim $M\setminus \left(P_1\cup P_2\right)$ must be disconnected.  (Proof: Consider a short normal arc $\alpha$ to $P_1$ and disjoint from $P_2$ with endpoints $a$ and $b$. If $M\setminus\left(P_1\cup P_2\right)$ were connected, we could join $a$ and $b$ by a second arc $\beta$ in $M\setminus\left(P_1\cup P_2\right)$.  The union $\gamma = \alpha\cup\beta$ is dual to $P_1$ but disjoint from $P_2$, contradicting $P_1$ homologous to $P_2$.)  Now let $X$ be some connected component of $M\setminus\left(P_1\cup P_2\right)$, with $Y_i:=X\cap P_i$, $i=1,2$.  Without loss of generality, assume ($d-1$)-$\operatorname{area}\left(P_2\right)\leq$ ($d-1$)-$\operatorname{area}\left(P_1\right)$, otherwise switch notation $1 \leftrightarrow 2$.  Now exchange $P_1\leadsto \left(P_1\setminus Y_1\right) \cup Y_2 := P_1^\prime$;
\[\operatorname{area}\left(P_1^\prime\right)\leq\operatorname{area}\left(P_1\right),\]
and after rounding corners $P_1^\prime\leadsto P_1^{\prime\prime}$ we have
\[\operatorname{area}\left(P_1^{\prime\prime}\right)<\operatorname{area}\left(P_1\right),\]
a contradiction.

Note that if $P_1\cap P_2=\emptyset$, the preceding exchange argument merely swaps $P_1$ and $P_2$, there is no contradiction.  It is the ``partial swapping'' that produces the contradiction.

We now give an alternative argument from the point of view of flows which illustrates the power and simplicity of this viewpoint for certain questions.  Any MF $f$ must be the unit normal to \emph{any} MC, $P_i$, clearly impossible if $P_1$ and $P_2$ intersect transversely at any $p$, which as we have seen is a consequence of $P_1\cap P_2\neq\emptyset$.  From the flow perspective, the more technical exchange argument is unnecessary.

Suppose, contrary to what we just demonstrated, that there was a metric $g_{i,j}$ for which $P_1\cap P_2\neq\emptyset$.  We could build a $\tau$-deformation $g_{i,j}(\tau)$ so that for $\tau\leq 0$, $P_1$ was the unique MC and for $\tau\geq 0$ $P_2$ was the unique MC.  Again considering unit normals, such a family would imply discontinuous $\kappa_\tau$.  So we treat the impossibility of $P_1$ and $P_2$ intersecting as evidence for Conjecture \ref{conj:riemann}.

The difficulty in extending the graph proof (Lemma \ref{lem:continuity}) to \ref{conj:riemann} is twofold.  There is not a completely satisfactory technology for passing from combinatorial to smooth limits (despite the very interesting analysis of P.L. minimal surfaces in the MF/MC context \cite{Sullivan}) and a difficulty in scaling our pointwise estimate on the norm of $f^\prime$ (the augmenting flow)
\begin{equation}\label{eq:norm}
||f^\prime||^\infty\leq(1.1)\epsilon F
\end{equation}

The issue with this estimate is that as one refines any combinatorial approximation of the continuum, the weights on the edges of the approximation should approach zero, but for $\epsilon > 0$ fixed, the right-hand side of (\ref{eq:norm}) has a definite value.  To overcome this issue, a more \emph{local} discussion of the addition of augmenting paths would be in order.  Our present analysis permits the possibility that a small change in the metric near one site will reroute large amounts of flux at a faraway site.  To make a proof in the continuum this must be avoided, perhaps by reassigning augmenting flow among nearly parallel copies of an edge.  Such parallels should become available with the correct combinatorial approximation to the continuum, but the details are not obvious.

In the discrete case, we noted the concept of \emph{augmentation paths} \cite{WikiFF}, a path utilizing only the residual weight on each edge after an initial feasible solution has been proposed.  This concept has a natural generalization to the Riemannian setting.  If $v$, $||v_x||^\infty\leq 1$, is a vector field on $M$, there is a \emph{bundle of off-centered unit balls} $\{B_x\}$ defined by the property that $||w_x + v_x||\leq 1$ for all $w_x\in B_x$.  This is the analog of the residual graph. The analog of an augmentation path is a curve $C$ from $A$ to its complement (or, more generally, in the dual cycle to the homology class in which we are finding a min cut) whose unit tangent vector $t_x$ can be scaled to fit in $B_x$ all along the path, i.e.\ such that there exists a constant $\alpha>0$ satisfying $\alpha t_x\in B_x$ for all $x\in C$. Given such a path, one can construct an augmenting flow $w$ by taking a pencil of curves that fill out a small tube around $C$ and setting $w_x=\alpha t_x$ within the tube and 0 outside. One expects that, as in the network case, a version of the FF algorithm using such augmentation paths will be greedy, i.e.\ will converge to a max flow starting from any allowed flow. Such paths should play a prominent role in the proof of conjecture \ref{conj:riemann}.

\subsection{Nesting}\label{nesting}

The nesting property, defined in subsection \ref{MFMC}, states that, given two regions $A$, $B$ of the boundary of $M$, which without loss of generality we take to be disjoint, it is possible to find a flow $v(A,B)$ that is simultaneously a max flow with respect to $A$ and to their union $AB$.

In the network setting, this can easily be proved using the FF algorithm: Apply FF to find a max flow on $A$ starting from a max flow on $AB$; the greediness of FF says that this is possible. The only question is whether at the end we still have a max flow on $AB$. However, by definition the augmentations paths leave $A$, so they cannot reduce the flux on $AB$ (nor can they increase it, so in fact they must end on $B$).

As discussed in the previous subsection, a similar notion of augmentation path can be defined in the Riemannian setting, and one again expects to be able to define a greedy FF algorithm using this notion, although we have not worked out the crucial details.

\subsection{Approximating the continuum}

We close with a proposal for combinatorial approximation, as it may be useful for passing MF/MC arguments back and forth between discrete and continuous settings, and also for its possible applicability to the more general problem of regularizing quantum gravity.

The problem with a P.L. approximation to a Riemannian manifold is that, no matter how fine, the simplices have large mutual dihedral angles.  In the case of minimal surfaces, the best that one might hope for is a $C^0$-approximation, certainly not $C^1$.

Our proposal is to fill a Riemannian manifold up with needles of roughly constant length, to fill $\tau_U(M)$, the unit tangent bundle of $M$, as uniformly as possible.  One way to do this is to choose $0\ll \epsilon^{\prime\prime}\ll\epsilon^\prime\ll\epsilon\ll k$, $k$ the curvature scale.  Then take a maximal \emph{packing} of disjoint $\epsilon^{\prime\prime}$-balls in $M$.  This will be somewhat uniform in the sense that the radius $2\epsilon^{\prime\prime}$ balls with the same centers \emph{cover} $M$.  Now for the ``needles'' $N$: build an edge set $N$ of segments between all ball centers whose distance $\delta$ satisfies $\epsilon^\prime < \delta < (1.1)\epsilon^\prime$.  The resulting graph: vertices $=\{\epsilon^{\prime\prime}\text{-ball centers}\}$, edges$=N$, should be considered as $\epsilon^{\prime\prime}$, $\frac{\epsilon^{\prime\prime}}{\epsilon^\prime}$, $\epsilon^\prime$, $\frac{\epsilon^\prime}{\epsilon}$, and $\epsilon$ all approach zero.  This sequence of graphs appears capable of giving a reasonable representation of differential information. We illustrate this with the case at hand: flows.

Let us define maps $j$ and $k$ below:

\[\begin{xy}<5mm,0mm>:
(0,0)*=<130pt,50pt>[F]{\begin{array}{l}
\text{Smooth vector fields $v_x$} \\
\text{on $M$ with maximum} \\
\text{velocity $\leq 1$}
\end{array}}
,(15,0)*=<150pt,35pt>[F]{\begin{array}{l}
\text{Weighted arrows $\vec{n}$ on each} \\
\text{$n\in N$ with weight $\leq 1$}
\end{array}}
\ar^j (5,0.625);(9.25,0.625)
\ar_k (9.25,-0.625);(5,-0.625)
\end{xy}\]

To define $k$, fix a smooth ``averaging'' function $g: B_\epsilon\rightarrow R^+$ with $g\vert_{\partial B_\epsilon}\equiv 0$ and $\int_{B_\epsilon} d\text{vol}\;g = c_d$. The dimension-dependent constant $c_d = \frac{r_d}{r_{d-1}}$, where $r_n = \frac{2\pi^{\frac{d}{2}}}{\Gamma\left(\frac{d}{2}\right)}$ is the volume of the unit sphere embedded in Euclidean space $E^d$ (so $c_2=\pi$, $c_3=2$, $c_4=\frac{\pi}{2}$, $c_5 = \frac{4}{3}$, $c_6 = \frac{3}{8}\pi$, $\ldots$).  

Set:
\begin{equation}
k\{\vec{n}\}_x = \frac{1}{\sharp_x}\int_{B_\epsilon(x)}d\text{vol}\; g\; \vec{n},
\end{equation}
where $\sharp_x$ is the number of midpoints $\vec{n}\in B_\epsilon(x)$, the ball of radius $\epsilon$ about $x$.  Since $\epsilon \ll\kappa$, the $\vec{n}$ can be averaged (almost) unambiguously via parallel transport.

First define $j_0\left(v_x\right)$ on a tangent vector $v$ to $x$, by assigning a flow of velocity $||v_x||$ along every $n$ with needle midpoint $\hat{n}\in B_\epsilon(x)$ in the direction making the inner product $\langle v_x,\vec{n}\rangle > 0$.  (Note this rule assigns velocity $||v_x||$ not $\left|\langle v_x,\vec{n}\rangle\right|$ as one might guess.)  Now average $j_0$ to define $j$:

\begin{equation}
j(v) = \frac1{c_d}\int_{y\in B_x(\epsilon)} d\text{vol}\;g\;  j_0\left(v_y\right)
\end{equation}

Up to small errors, $k$ and $j$ are mutually inverse and (approximately) send vector fields $v$ with $\nabla\cdot v = 0$ to flow with zero combinatorial divergence, and vice versa. The factor $c_d$ compensates for the fact that a needle $n$ making angle $\theta$ with the smooth vector field $v_x$ at $x$ contributes only $\pm\cos\theta$ ($\operatorname{weight}(n)$) to the flux in direction $v_x$.

\bibliographystyle{JHEP}
\bibliography{refs}

\end{document}